\documentclass[12pt, a4paper]{article}

\bibliographystyle{plainurl}

\usepackage{microtype}
\usepackage{graphicx} 
\usepackage[ruled,vlined,linesnumbered]{algorithm2e}
\providecommand{\DontPrintSemicolon}{\dontprintsemicolon}
\usepackage{amsmath,amsfonts,amsthm,amssymb}
\usepackage[retainorgcmds]{IEEEtrantools}
\usepackage{fourier}

\theoremstyle{plain}
\newtheorem{theorem}{Theorem}[section]

\newtheorem{lemma}[theorem]{Lemma}
\newtheorem{proposition}[theorem]{Proposition}

\newtheorem{myclaim}[theorem]{Claim}
\newtheorem{definition}[theorem]{Definition}
\newtheorem{corollary}[theorem]{Corollary}

\theoremstyle{remark}

\newtheorem{remark}{Remark}

\newenvironment{proofof}[1]
{\noindent{\em Proof of Claim}}{\hfill {\tiny \qed}\smallskip}

\renewcommand{\vec}[1]{\mathbf{#1}}
\DeclareMathOperator*{\argmax}{arg\,max}

\newcommand{\MAX}{\mathrm{MAX}}
\newcommand{\ALG}{\mathrm{ALG}}
\newcommand{\OPT}{\mathrm{OPT}}
\newcommand{\SP}{\mathrm{SP}}
\newcommand{\ND}{\mathrm{ND}}

\allowdisplaybreaks

\addtolength{\oddsidemargin}{-.6in}
\addtolength{\evensidemargin}{-.6in}
\addtolength{\textwidth}{1.2 in}
\addtolength{\topmargin}{-.7in}
\addtolength{\textheight}{1.1 in}

\DeclareMathOperator{\val}{val}


\title{Inequity Aversion Pricing over Social Networks: Approximation Algorithms and Hardness Results\footnote{A preliminary conference version appeared in MFCS 2016 \cite{AMS16}.}}

\author{
Georgios Amanatidis
\hspace{-10pt} 
\and
Peter Fulla
\hspace{-10pt} 
\and
Evangelos Markakis
\hspace{-10pt}
\and
Krzysztof Sornat
}

\begin{document}
\maketitle

\begin{abstract}
	We study a revenue maximization problem in the context of social networks. Namely, we consider a model introduced by \cite{AMT13} 
	that captures {\it inequity aversion}, i.e., prices offered to neighboring vertices should not be significantly different. We first provide approximation algorithms for a natural class of instances, referred to as the class of single-value revenue functions. Our results improve on the current state of the art, especially when the number of distinct prices is small. This applies, for example, to settings where the seller will only consider a fixed number of discount types or special offers. We then resolve one of the open questions posed in \cite{AMT13}, by establishing APX-hardness for the problem.  
	Surprisingly, we further show that the problem is NP-complete even when the price differences are allowed to be large, or even when the number of allowed distinct prices is as small as three. Finally, we provide some extensions of the model, regarding either the allowed set of prices, or the demand type of the clients.   
\end{abstract}

\section{Introduction}
We study a differential pricing optimization problem in the presence of network effects. Differential pricing is a well known practice in everyday life and refers to offering a different price to potential customers for the same service or good. Examples include offering cheaper prices when launching a new product, making special offers to gold and silver members of an airline miles program, offering discounts at stores during selected periods, and several others.

We are interested in studying differential pricing in the context of a social network. Imagine a network connecting individuals (who are seen as potential clients here) with their friends, family, colleagues, or other people who can exert some influence on them. One can have in mind other forms of abstract networks as well, e.g., a node could represent a geographic region, a neighborhood within a city, a type of profession, a social class, and edges can represent interactions or proximity. The presence of such a network creates {\it externality effects}, meaning that the decision of a node to acquire a new product or a new service is affected by the fact that some other nodes within her social circle (her neighborhood in the graph) already did so.
A typical example of positive externalities is when someone becomes more likely to buy a new product due to the positive reviews by a friend who already bought it in the past. Modeling positive externalities has led to a series of works that study marketing strategies for maximizing the diffusion of a new product, \cite{DR01,KKT03}, or the total revenue achieved, \cite{HMS08} (see also the Related Work section).      

However, there also exist negative externality effects that can arise in a network. One example is the purchase of a product with the intention to show off and be a locally unique owner, e.g., a new type of expensive car, or clothes (also referred to as {\it invidious consumption}, see \cite{CCHW15}). In such a case, a node may be deterred from buying the same product, if a neighboring node already did so. A second example of negative externalities, which is the focus of our work, and arises from differential pricing, is {\it inequity aversion}, see e.g., \cite{BO00} and \cite{FS99}. 
This simply means that a customer may experience dissatisfaction if she realizes that other people within her social circle, were offered a better deal for the same service. Hence, significant price differences, can create a negative response of some customers towards a product. 
Inequity aversion can also arise under a different, but equally applicable, interpretation: nodes may correspond to retail stores and an edge can signify proximity, so that clients could choose among these stores. Again, having significantly different prices to the same products is not desirable. 

To capture the need for avoiding such phenomena, the relatively recent work of \cite{AMT13} introduced a model for pricing nodes over a social network. The main idea is to impose  constraints on each edge, specifying that the price difference between two neighbors should be bounded by some (endogenous) parameter, determined by the two neighbors.
On top of this, the seller is also allowed to not make a price offer to some nodes (referred to as introducing \emph{discontinuities}, see the related discussion in Section \ref{sec:definitions}), in which case the difference constraints do not apply for the edges incident to these nodes. 
Another way to interpret the model of discontinuities can be as follows: instead of discontinuities, we could allow price offers that would violate some edge constraints. In that case, if there is a node who has been offered a very high price, and there is such a violation, he would feel envious of some neighbor, and would choose not to buy the product. Hence, the seller would essentially not hope to extract any revenue from such nodes.

Assuming a finite set of available prices, unit-demand users, and digital goods (i.e., the supply can cover all the demand) the problem is to find a feasible price vector that satisfies the edge constraints and maximizes the total revenue. In its more general form the problem was shown to be NP-complete, and exact or approximation algorithms were derived for some interesting cases in \cite{AMT13}.
However, several questions remained open regarding the approximability status of the problem. 
\medskip   

\noindent {\bf Contribution:}
We revisit the model introduced by \cite{AMT13} (namely Model II in their work, which is the more general one), and study the approximability of the underlying revenue maximization problem. We resolve one of the open questions posed in \cite{AMT13}, regarding the complexity of the problem under the natural class of the so-called {\it single-value} revenue functions. Simply put, this means that the revenue extracted by each node is exactly the price offered to her, as long as the price does not exceed her valuation for the product (the usual assumption made in auction settings as well). We first establish APX-hardness for this class answering one of the open questions of \cite{AMT13}, and we also show that the problem is NP-complete even when the price differences are allowed to be relatively large (a case that could be thought easier to handle). Furthermore, we also show NP-hardness when we have only three distinct prices allowed, in contrast to the case of two distinct price offers, which is polynomial time solvable. 
We then provide approximation algorithms that improve some of the currently known results. Our improvement is stronger when the number of distinct prices is small. This applies for example to many settings where the seller will only consider a fixed number of discount types or special offers to selected customers. As the number of available price offers becomes large, the performance of our algorithm degrades to a logarithmic approximation. Finally, we also provide two extensions of these results; the first concerns a more general model where the allowed prices come from a set of $k$ arbitrary integers, instead of using price sets of the form $\{1, 2, \ldots, k\}$, as done in \cite{AMT13}, and the second concerns a multi-unit demand setting (see Sections \ref{sec:general} and \ref{sec:slope-svrf}).  \medskip 

\noindent {\bf Related Work:} Price discrimination is well studied in various domains in economics and is also being applied to numerous real life scenarios. The algorithmic problem of differential pricing over social networks is a more recent topic, initiated by \cite{HMS08}. The work of \cite{HMS08} studied a model with positive externalities, where the valuation of a player may increase as more friends acquire a good, and analyzed the performance of a very intuitive class of pricing strategies. Further improvements on the performance of such strategies were obtained later on by \cite{FS12}. The work of \cite{AGHMMN10} also considers a pricing problem but in an iterative fashion, where the seller is allowed to reprice a good in future rounds. Revenue maximization under a mechanism design approach was also taken in \cite{HIMM11} under positive network externalities. Finally, positive externalities have been used to model the diffusion of products on a network, see, among others, 
the exposition in \cite{Kleinberg07}.

Negative externalities within networks, as we focus on here, are less studied in the literature. For the concept of inequity aversion, see e.g., \cite{BO00,FS99}. The work most closely related to ours is \cite{AMT13}, which introduced the model considered here. Efficient algorithms were obtained for the case where discontinuities are not allowed (even for more general revenue functions), and also for networks with bounded treewidth. An approximation ratio of $1/(\Delta+1)$ was also provided, where $\Delta$ is the maximum degree. Similar results were shown for a stochastic version of the model. 
Finally, other types of negative externalities have been considered e.g., in \cite{BKMX11,CCHW15} which study the effects of invidious consumption.

\section{Definitions and Preliminaries}\label{sec:definitions}

As usual, the social network in the model we study is represented as an undirected graph $G = (V, E)$, with $|V| = n$. 
The nodes depict the potential customers, and we consider a provider of some good or service, who has a finite set $P$ of available prices that he could offer to the nodes. In most of our presentation, we assume, as in \cite{AMT13}, that the available prices are given by $P = \{1, 2, \ldots, k\}$.
In Section \ref{sec:general}, we show how to extend the analysis when $P$ is an arbitrary set of $k$ positive integers, i.e., $P = \{p_1, p_2, \ldots, p_k\}$.

We further assume that every node has a unit-demand for the same product and that the supply of the seller is enough to cover the demand of all nodes.
For every node $v\in V$, we associate a revenue function $R_v:\{1,2,\ldots,k\} \mapsto \mathbb{N}$, 
that maps an offered price $p_v$ to the revenue that the provider gains from this offer. In this 
work, we focus on a simple and intuitive class of revenue functions, also studied in \cite{AMT13}. 
In particular, for a node $v\in V$, $R_v$ is called a \textit{single value revenue function}, if 
there exists a value $\val(v)$ such that when offered a price $p_v$:
\[R_v(p_v) =
\begin{cases}
p_v & \text{if } \val(v) \geqslant p_v \\
0 & \text{if } \val(v) < p_v
\end{cases}\]

We assume from now on that every node has a single value revenue function.
For the problem we study, we can also assume, without loss of generality, that $\val(v)\in P$, for every $v\in V$. To see this, note that for revenue maximization, that we are interested in, nodes with $\val(v)>k$ can only yield a revenue of $k$ and could be replaced by $\val(v)=k$, i.e., the highest possible price. Also for values that are less than $k$ but not integers, we can again extract only an integer revenue, given the form of $P$, so we could round them to the next integer that is smaller than $\val(v)$. Finally, any node $v$ with $\val(v)<1$ can be deleted without affecting the optimal revenue (see the concept of \textit{discontinuity} defined below), so we can completely ignore such nodes to begin with. Thus, we consider only instances with $\val(v)\in \{1, 2, \ldots, k\}, \forall v\in V$ (or with $\val(v) \in \{p_1, p_2, \ldots, p_k\}$, for the generalized results of Section \ref{sec:general}).

Given a vector $\vec{p} = (p_v)_{v\in V}$ of prices offered to the nodes, 
we use $R(\vec{p})$ to denote 
the total revenue, i.e., 
$R(\vec{p}) = \allowbreak \sum_{v \in V} R_v(p_v)$. Our goal is to find a price vector that maximizes the total revenue. At the same time, however, we want to capture the effect of {\it inequity aversion} \cite{BO00,FS99} in social networks. This means that a node may experience dissatisfaction if she sees that other nodes within her social circle, were offered a better deal for the same service. Hence, significant price differences, create  negative externalities among users. 

To avoid such situations the model introduced in \cite{AMT13} has constraints on each edge, stating that the  price difference between two neighbors $u,v$ is bounded, i.e., $p_u - p_v \leqslant \alpha(u, v)$ and $p_v - p_u \leqslant \alpha(v, u)$, for every $(u, v)\in E$. Here, $\alpha(\cdot, \cdot) \geqslant 0$ is integer-valued (given that the prices are also integers), 
and it can be non-symmetric. Furthermore, the seller is also allowed not to make an offer to certain nodes. Formally, this is captured by having one more price option, which we denote by $\bot$, with $R_v(\perp)=0$. Setting $p_v = \bot$ to a node means that the provider does not make any offer to $v$
and there is no price restriction on the edges that are incident to $v$. We can essentially think about this as deleting these vertices from the graph. We will refer to setting $p_v = \bot$ to a node $v\in V$, as introducing a {\it discontinuity} on $v$. 
Avoiding making an offer can be thought of as choosing not to promote a product or service within a certain region or within a certain social group.  
In terms of optimization, discontinuities can help the seller in producing much higher revenue, than without discontinuities, as Proposition \ref{prop:h_n} in Section \ref{sec:sp} states.

There is another way to interpret the notion of discontinuities: Suppose we had no discontinuities, but at the same time we were allowed to make offers that could violate some price constraints along edges. In that case, suppose there is a node who has been offered a high price; this  price may or may not be higher than the maximum price the node is willing to pay. In the presence of an edge violation, however, he would feel envious of some neighbor, and being dissatisfied he would choose not to buy the product. Hence, the seller does not hope to extract any revenue from such nodes. This is equivalent to our model of just setting a discontinuity and only caring for the remaining edge constraints, not involving this node. For convenience, we will stick to allowing discontinuities, rather than making price offers that could violate the constraints.

Given this model, the set of feasible price vectors is then:
${\cal F} = \{\vec{p} : \forall ~v\in V, p_v\in P\cup \{\bot\},\allowbreak \text{\ and\ } \forall ~(u, v)\in E,\ p_u\neq\bot ~\wedge~ p_v\neq\bot \Rightarrow p_u - p_v \leqslant \alpha(u, v) ~\wedge~ p_v - p_u \leqslant \alpha(v, u) \}$.
Therefore, the problem we study is:\smallskip

\noindent \textbf{Inequity Aversion Pricing:}
Given a graph with edge constraints, and a single-value revenue function for each node, find a feasible price vector that maximizes the total revenue, i.e., find $\vec{q}\in \argmax_{\vec{p}\in{\cal F} } \;\sum_{v \in V} R_v(p_v)$.\smallskip

Some cases of this problem, as well as the variant where no discontinuities are allowed, are already known to be polynomial time solvable \cite{AMT13}. 
Regarding hardness, although the problem is NP-hard for more general revenue functions, it was posed as an open question whether NP-hardness still holds for single value revenue functions (the hardness result in \cite{AMT13} requires instances with revenue functions that cannot be captured by single value ones).

\section{Warmup: Basic Facts and Single-price Solutions}
\label{sec:sp}

In this section, we present a simple algorithm and some basic observations, which we use later on, in Section \ref{sec:k-values}. 

Let $v_{\max} = \max_{v \in V} \val(v)\leqslant k$, and $\MAX = \sum_{v \in V} \val(v)$. Given an instance of the problem, we denote by $\OPT$ the revenue of an optimal solution. The quantity $\MAX$ is clearly an upper bound on the optimal revenue, hence $\OPT \leqslant \MAX$.

We will refer to a solution as being a {\it single-price} solution, if it charges the same price to every node without introducing discontinuities. This is always a feasible solution since all the edge constraints are satisfied. The revenue extracted by a single-price algorithm that uses the price of $p$ for all nodes is equal to $\ p\cdot |\{v\in V: \val(v)\geqslant p\}|$.

To understand whether a single-price solution can be of any help for our setting, we can examine the performance of the best possible single price. The following observation suggests that we do not need to try too many values, even if $v_{max}$ is very large.

\begin{lemma}\label{number_of_OP}
	In order to find the optimal single-price solution, it suffices to check at most $\min\big\{n,\allowbreak v_{\max}\big\}$ possible prices.
\end{lemma}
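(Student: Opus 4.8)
The plan is to show that the optimal single price can always be taken to equal one of the distinct node values $val(v)$, and then to count how many such values there can be.

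First I would set up notation. For a price $p\in\{1,\ldots,v_{\max}\}$, write $c(p) = |\{v\in V : val(v)\geqslant p\}|$, so that the single-price revenue is $f(p) = p\cdot c(p)$, exactly the quantity noted just before the lemma. The function $c$ is nonincreasing in $p$, and it stays constant on any range of prices that contains no node value $val(v)$.

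The key step is to argue that we may restrict attention to prices of the form $p = val(v)$. Suppose $p$ is not equal to any $val(v)$, and let $p'$ be the smallest value in $\{val(v) : v\in V\}$ with $p'\geqslant p$ (if no such value exists then $f(p)=0$ and $p$ is irrelevant). Since there is no node value strictly between $p$ and $p'$, we have $val(v)\geqslant p \iff val(v)\geqslant p'$ for every $v$, and hence $c(p) = c(p')$. Because $p'\geqslant p$, this yields $f(p') = p'\cdot c(p')\geqslant p\cdot c(p) = f(p)$. So the best single price is attained at some distinct node value, and it suffices to check only those.

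Finally I would bound the number of distinct values. Since every $val(v)$ lies in $\{1,\ldots,v_{\max}\}$, there are at most $v_{\max}$ distinct values; and since there are $n$ nodes, there are at most $n$ distinct values. Thus the number of candidate prices we must test is at most $\min\{n,v_{\max}\}$, as claimed. I expect no serious obstacle here: the only point requiring care is verifying that pushing a price $p$ up to the next node value leaves the customer count $c(p)$ unchanged, which is immediate from the absence of node values in the open interval $(p,p')$.
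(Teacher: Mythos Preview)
Your argument is correct and essentially the same as the paper's: both restrict attention to the distinct node values by pushing any non-value price up to the next value (where the count of served nodes is unchanged), and then bound the number of such values by $\min\{n, v_{\max}\}$. The only cosmetic difference is that you introduce the notation $c(p)$ and $f(p)$ explicitly, while the paper argues directly; the underlying logic and case handling coincide.
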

\begin{proof}
	There are at most $min\{n, v_{\max}\}$ different values in the set $\{\val(v): v \in V\}$. It is never optimal to use any price $p\notin \{\val(v): v \in V\}$. Indeed, if $p \in (\val(v_1),\val(v_2))$, where $\val(v_1)$ and $\val(v_2)$ are two consecutive distinct values for some nodes $v_1, v_2\in V$, then it is strictly better to set the price to $\val(v_2)$. For the same reason, it is suboptimal to set a price that is less than the minimum value across nodes, while if we use a price $p > v_{\max}$ then we gain no revenue. 
\end{proof}

Hence in  $O(\min\{n, v_{\max}\})$ steps, we can select the best single-price solution. Let us denote by $R_{\SP}$ the revenue raised by this solution. The performance of $R_{\SP}$ has been analyzed in a different context\footnote{The work of \cite{GHKSW06} studied an auction pricing problem without the presence of social networks.} by \cite{GHKSW06}, where it was shown that it achieves a $\Theta(\ln{n})$-approximation. Here we give a slightly tighter statement, which we utilize in later sections for small values of $v_{max}$. 

\begin{theorem}\label{thm:h_n}
	For any number $n$ of agents, the optimal single-price solution achieves a $1/H_r$-approximation, where $r=\min\{n, v_{\max}\}$, and $H_{\ell}$ is the $\ell$-th harmonic number, i.e.,
	\[ R_{\SP} \geqslant \frac{\MAX}{H_r} \geqslant \frac{\OPT}{H_r} \,. \] 
	Furthermore, the approximation guarantee is tight.
\end{theorem}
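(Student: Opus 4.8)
The plan is to first prove the lower bound $R_{\SP}\geqslant \MAX/H_r$ and then invoke the already-noted inequality $\OPT\leqslant\MAX$ to obtain the second inequality for free. Since $r=\min\{n,v_{\max}\}$ and the harmonic numbers are increasing, we have $H_r=\min\{H_n,H_{v_{\max}}\}$, so it suffices to establish the \emph{two} separate bounds $\MAX\leqslant R_{\SP}\cdot H_n$ and $\MAX\leqslant R_{\SP}\cdot H_{v_{\max}}$ and then keep whichever is stronger. Each of these follows from a double-counting identity for $\MAX$ combined with the trivial fact that every single-price revenue is at most $R_{\SP}$.

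For the first bound I would sort the values in non-increasing order as $val_1\geqslant val_2\geqslant\cdots\geqslant val_n$. Offering the single price $val_i$ sells to every node of value at least $val_i$, hence to at least the top $i$ nodes, so $R_{\SP}\geqslant i\cdot val_i$, i.e. $val_i\leqslant R_{\SP}/i$. Summing over $i$ gives
\[ \MAX=\sum_{i=1}^{n}val_i\;\leqslant\;R_{\SP}\sum_{i=1}^{n}\frac1i\;=\;R_{\SP}\cdot H_n . \]
For the second bound I would instead count by price level: for each $p\in\{1,\dots,v_{\max}\}$ let $n_p=|\{v:val(v)\geqslant p\}|$, so that the single-price revenue at $p$ equals $p\cdot n_p\leqslant R_{\SP}$, giving $n_p\leqslant R_{\SP}/p$. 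Writing each value as $val(v)=\sum_{p=1}^{v_{\max}}\mathbf{1}[val(v)\geqslant p]$ and exchanging the order of summation yields $\MAX=\sum_{p=1}^{v_{\max}}n_p\leqslant R_{\SP}\cdot H_{v_{\max}}$. Combining the two bounds gives $\MAX\leqslant R_{\SP}\cdot H_r$, which rearranges to the claimed $R_{\SP}\geqslant \MAX/H_r\geqslant \OPT/H_r$.

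For tightness I would exhibit, for each target value of $r$, an instance on which the ratio is exactly $1/H_r$. Take the edgeless graph on $r$ nodes and let $M=\mathrm{lcm}(1,2,\dots,r)$, assigning the $i$-th node the value $M/i$ (an integer by the choice of $M$); here $n=r$ and $v_{\max}=M\geqslant r$, so indeed $\min\{n,v_{\max}\}=r$. With no edges all constraints are vacuous, so charging each node its own value is feasible and $\OPT=\MAX=\sum_{i=1}^{r}M/i=M\cdot H_r$. On the other hand, every single price $p=M/i$ sells exactly to the top $i$ nodes, for revenue $i\cdot(M/i)=M$, and by Lemma~\ref{number_of_OP} no other price can do better, so $R_{\SP}=M=\MAX/H_r$. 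Hence the ratio $R_{\SP}/\OPT=1/H_r$ is attained.

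The routine parts are the two summations; the one genuinely non-obvious point is to notice that a single counting argument yields only one of $H_n$ or $H_{v_{\max}}$, and that the sharper $H_r$ bound requires running \emph{both} arguments and retaining the better one. The other delicate point is the integrality of the tight instance, which is exactly why I use $M=\mathrm{lcm}(1,\dots,r)$ rather than a generic geometric sequence of values.
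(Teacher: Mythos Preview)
Your proof is correct and follows essentially the same approach as the paper: the two bounds $\MAX\leqslant R_{\SP}\cdot H_n$ and $\MAX\leqslant R_{\SP}\cdot H_{v_{\max}}$ are obtained via the same two decompositions of $\MAX$ (sorting the values versus summing over price levels), and your tightness construction with values $M/i$ on an edgeless graph is the same harmonic instance the paper uses (they take $M=n!$ on a clique with non-binding constraints, which is equivalent). The only minor addition in the paper is a second tightness family for the regime $v_{\max}\leqslant n$, but your single family already suffices to show the bound $1/H_r$ is attained for every $r$.
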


The proof follows from the proof of Theorem \ref{thm:h_n_gen} in Section \ref{sec:general}, which provides a more general result.
One interesting point here is that single-price solutions do not use any discontinuities. 
If $R_{\ND}$ is the optimal revenue without using any discontinuities, clearly $R_{\ND}\geqslant R_{\SP}$.
And as we mentioned in Section \ref{sec:definitions}, it is possible to find the optimal solution 
that does not use discontinuities in polynomial time \cite{AMT13}; so why use something worse instead of using directly $R_{\ND}$? Actually, for our purposes, besides being harder to argue about, $R_{\ND}$ turns out to be as bad an approximation as $R_{\SP}$ in the worst case, when allowing discontinuities.

The proposition below provides some further justification for the model with discontinuities, under the seller's point of view. In particular, it reveals that introducing discontinuities can cause a significant increase in the optimal revenue achievable by the seller, compared to what can be achieved without discontinuities.

\begin{proposition}\label{prop:h_n}
	The optimal solution with no discontinuities achieves a $1/H_r$-approximation, where $r=\min\{n, v_{\max}\}$, and this approximation guarantee is tight.
\end{proposition}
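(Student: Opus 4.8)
The plan is to establish the two assertions separately: the approximation guarantee $R_{\ND}\geqslant \OPT/H_r$, and its tightness.

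The guarantee is essentially free. Every single-price solution charges a common price with no discontinuities, so it is in particular a no-discontinuity solution; hence $R_{\ND}\geqslant R_{\SP}$. Chaining this with Theorem \ref{thm:h_n}, which gives $R_{\SP}\geqslant \MAX/H_r\geqslant \OPT/H_r$, yields $R_{\ND}\geqslant \OPT/H_r$ immediately. No new argument is required for this half.

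The content lies in the tightness construction, where I want a family of instances on which the best no-discontinuity solution coincides with the best single-price solution, while the true optimum (which may use discontinuities) exceeds it by a factor approaching $H_r$. I would combine two ingredients. First, a connectivity trick: if $G$ is connected and $\alpha(u,v)=0$ on every edge, then any feasible price vector that uses no discontinuities forces $p_u=p_v$ across each edge, hence a single common price on all of $V$; thus on such instances $R_{\ND}=R_{\SP}$ exactly. Second, the harmonic value distribution underlying Theorem \ref{thm:h_n}: fix a large integer $C$ divisible by $\mathrm{lcm}(1,\dots,r)$, and create nodes so that exactly $C/p$ of them have value at least $p$, for each $p\in\{1,\dots,r\}$. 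Then every single price $p$ raises revenue $p\cdot(C/p)=C$, so $R_{\SP}=C$, while $\MAX=\sum_{v}val(v)=\sum_{p=1}^{r}|\{v:val(v)\geqslant p\}|=\sum_{p=1}^{r}C/p=C\,H_r$. Here $v_{\max}=r$ and $n=C\geqslant r$, so $r=\min\{n,v_{\max}\}$, and the zero-slack connectivity gives $R_{\ND}=R_{\SP}=C=\MAX/H_r$.

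It remains to wire these nodes into a connected zero-slack graph on which discontinuities recover almost all of $\MAX$. I would lay the nodes out as a single path partitioned into $r$ consecutive monochromatic blocks, the $p$-th block consisting of all value-$p$ nodes, with one edge joining the last node of block $p$ to the first node of block $p+1$. The optimum then introduces a discontinuity on the lower-valued endpoint of each of the $r-1$ inter-block edges, severing the path into $r$ monochromatic segments; it charges every surviving node its own value, which is feasible since within a segment all offered prices are equal. This recovers at least $\MAX$ minus the value of the $r-1$ deleted nodes, a loss of at most $\sum_{p=1}^{r-1}p=O(r^{2})$. The main obstacle is precisely this last point, namely verifying that discontinuities let the optimum capture essentially all of $\MAX$: since $\MAX=C\,H_r$ grows without bound as $C\to\infty$ while the separator cost stays bounded by $O(r^{2})$, we get $\OPT/\MAX\to 1$, and therefore $R_{\ND}/\OPT\to 1/H_r$. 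The block-path layout is what keeps the separator to a single node per block boundary, so that the scaling of $C$ renders the sacrificed value asymptotically negligible and drives the ratio to the claimed bound.
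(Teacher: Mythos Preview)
Your proposal is correct. The approximation-guarantee half matches the paper verbatim: $R_{\ND}\geqslant R_{\SP}\geqslant \OPT/H_r$.

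For tightness, you and the paper take different but equally valid routes. The paper does not build a new instance from scratch; instead it takes the tight examples from Theorem~\ref{thm:h_n_gen} (where $\alpha(\cdot,\cdot)=k$, so $\OPT=\MAX$ already) and adjoins a single ``hub'' vertex of value $1$ connected to every other vertex with $\alpha=0$ on those new edges. The hub alone forces every no-discontinuity solution to use a common price, giving $R_{\ND}=R_{\SP}$, while a single discontinuity on the hub restores $\OPT$ to the original $\MAX$. Your construction instead sets $\alpha=0$ on \emph{all} edges of a block-path, which likewise collapses $R_{\ND}$ to $R_{\SP}$ via connectivity, but then needs $r-1$ discontinuities at the block boundaries to recover $\MAX-O(r^2)$; the $C\to\infty$ limit absorbs that loss. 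The paper's trick is more economical (one extra vertex, one discontinuity, and it recycles the earlier tight families), whereas yours is self-contained and makes the mechanism---zero-slack connectivity forcing single-price behaviour---fully explicit without relying on the structure of the Theorem~\ref{thm:h_n_gen} examples. Both yield only asymptotic tightness (the paper's ratios are $1/H_n$ and $(n+1)/(nH_{v_{\max}})$ rather than exactly $1/H_r$), so neither dominates the other in strength.
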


The approximation guarantee follows from the fact that single-price solutions do not use any discontinuities. To see that without using any discontinuities one cannot always do better, we can modify slightly the examples that give tightness within the proof of Theorem \ref{thm:h_n_gen} (see Remark \ref{rem:prop1} after the proof of Theorem \ref{thm:h_n_gen}).

\section{Approximation Algorithms for Inequity Aversion Pricing}
\label{sec:k-values}
In this section we present new approximation algorithms for the problem by exploiting ways in which setting discontinuities in certain nodes can help. Our main result is an approximation algorithm, with a ratio of $(H_k-0.5)^{-1}$. Even though asymptotically this is no better than the optimal single-price algorithm, it does yield better ratios for instances where $k$ is a small constant. The motivation for studying cases where the set of available prices is small is that a seller may be willing to offer only specific types of discount to selected customers, e.g., $20\%$ or $30\%$ off the regular price and so on, rather than using an arbitrary set of prices.  

We will begin with an exact algorithm for the case of $P = \{1, 2\}$, before we move to having $P = \{1, 2, \ldots, k\}$. As we will see in Section \ref{sec:hard}, we can hope for an exact algorithm only for $k=2$, since the problem becomes hard for higher values of $k$. 
In fact, since our proposed algorithm also works for an arbitrary set of two integer prices, we will directly present the case of $P = \{p_1, p_2\}$, for two positive integers, $p_1, p_2$.

\subsection{An Exact Algorithm when $P = \{p_1, p_2\}$}
\label{sec:IS}

In this subsection, we assume the available prices are $p_1$, $p_2$, and $\bot$. Without loss of generality, we assume that $p_1 < p_2$.
Given the discussion in Section \ref{sec:definitions}, we will also assume that for every node $v\in V$, $\val(v)\in\{p_1, p_2\}$.
Even when $p_1=1$ and $p_2=2$, the problem still remains non-trivial. 
For such instances we already have a $\frac{2}{3}$-approximation by Theorem~\ref{thm:h_n}, that does not use discontinuities. The difficulty in improving this factor is in finding a way of selecting appropriate nodes to set to $\bot$.

Before we formally state our algorithm, let us  illustrate the main idea.
Consider an instance of the problem on a graph $G = (V, E)$. 
We construct an appropriate bipartite graph $H$ such that feasible price vectors 
for $G$ correspond to independent sets of $H$. Hence, the problem reduces to finding a maximum 
weighted independent set in bipartite graphs, which is well known to be solvable in polynomial time.
To be more specific, denote by $A, B \subseteq V$ 
the partition of the vertices of $G$ into vertices with $\val(v) = p_1$ 
and $\val(v) = p_2$ respectively.
To construct the bipartite graph $H = (V_1 \cup V_2, E')$, we will be using a superscript for every node, to clarify whether it belongs to $V_1$ or $V_2$. For each $a \in A$, the graph $H$ will have 
a vertex $a^1 \in V_1$; for each $b\in B$, $H$ will have two vertices 
$b^1 \in V_1$, and $b^2 \in V_2$ connected with an edge.
Additionally, we include an edge between $x^2$ and $y^1$ for every $x \in B, y
\in V$ for which $(x, y) \in E(G)$ and $\alpha(x, y) < p_2 - p_1$. Note that $H$ is
bipartite, since there is no edge between vertices of the same superscript.

Now consider any independent set $S$ of $H$. We interpret $u^1
\in S$ as offering price $p_1$ to vertex $u$ in the original instance and $u^2 \in S$ as offering price $p_2$; note that it cannot be the case that both $u^1$ and $u^2$ belong to $S$. 
If none of $u^1, u^2$ belong to $S$, we  interpret this as introducing a discontinuity on $u$. 
To see that this is a feasible price vector, we only have to worry about edges $(x, y)$ in the graph $G$, for which $\alpha(x, y) < p_2 - p_1$, since all other constraints are trivially satisfied.  
But by construction, the grah $H$ has an edge $(x^2, y^1)$ whenever $\alpha(x, y) < p_2 - p_1$. Hence, we cannot include both $x^2$ and $y^1$ in $S$, and this ensures that either we offer the same price to such nodes, or one of them will have a discontinuity, implying that the resulting price vector is feasible. 
Conversely, it is easy to see that any feasible price vector corresponds to an independent set of $H$; given such a vector for the vertices of $G$, if price $p_i \in \{p_1, p_2\}$ is offered to $v$, we include the vertex $v^i$ in the independent set of $H$.

We can now make $H$ weighted by setting weight $p_i\in\{p_1, p_2\}$ to each vertex $v^i$, for $i=1, 2$. Then, the total
weight of an independent set equals the total revenue of the corresponding price
vector and vice versa. Thus, in order to solve the Inequity Aversion Pricing on the original instance, it suffices to find a maximum weight independent set of $H$.

\begin{algorithm}[h]
	\DontPrintSemicolon 
	{
		Given the graph $G = (V, E)$, construct the bipartite graph $H$ with  $V(H) = \allowbreak \{v^1 \ | \ v\in V(G)\}\cup \{v^2 \ | \ v\in V(G) \text{ and } \val(v) = p_2\}$ and $E(H) = E_1 \cup E_2$, where  $E_1 = \{(v^1, v^2) \ | \ v\in V(G) \text{ and } \val(v) = p_2\}$ and $E_2 = \{(u^1, v^2) \ | \ (u, v)\in E(G), \allowbreak \val(v) = p_2  \text{ and }  \alpha(v, u) < p_2-p_1\}$\;
		Find a maximum weight independent set on $H$, say $S\subseteq V(H)$\;
		For every $u^i \in S$, offer a price of $p_i$ to the corresponding vertex $u$ of $V(G)$ \;
		Set $\bot$ to all the remaining vertices of $V(G)$\;
		Return the resulting price vector \;
	}
	\caption{An exact algorithm when $P = \{p_1, p_2\}$}\label{fig:alg-is}
\end{algorithm} 

The next theorem summarizes the above discussion.

\begin{theorem}\label{thm:issolution}
	Algorithm \ref{fig:alg-is} solves optimally the Inequity Aversion Pricing problem when $P = \{p_1, p_2\}$ in polynomial time. 
\end{theorem}

\subsection{An Approximation Algorithm for $k>2$}

We now consider the case where  there are more than two available prices. In order to improve the approximation guarantee of Theorem \ref{thm:h_n}, we reduce the problem to the case of $k=2$, and use the results of the previous subsection. 

Consider an instance of the problem, with available prices in $\{\perp, 1, 2, \ldots, k\}$. As discussed in Section \ref{sec:definitions}, we may assume that $\val(v)\in \{1, 2, \ldots, k\}$ for every $v\in V$. We create now another instance, where we set the value of every node with $\val(v)>1$ to be equal to $2$. We can then run Algorithm \ref{fig:alg-is} from Subsection \ref{sec:IS} on this new instance. At the same time, we can also compute the optimal single-price solution for the original instance, and pick the best among these two solutions. This yields Algorithm \ref{fig:alg-k>2}, described below.

\begin{algorithm}[h]
	\DontPrintSemicolon 
	{\small
		Given an instance $I$, construct a new instance $I'$, where for every $v\in V$,  $\val'(v)  = \min\{\val(v), 2\}$; everything else remains unchanged \;
		Run Algorithm \ref{fig:alg-is} from Subsection \ref{sec:IS} on instance $I'$, and let $R_*$ be the revenue obtained\;
		Compute the optimal single-price solution without discontinuities, on the original instance $I$, as described in Section \ref{sec:sp}, with revenue $R_{\SP}$ \;
		Return the solution that achieves $\max \{R_*, R_{\SP}\}$\;
	}
	\caption{An algorithm for $k>2$}\label{fig:alg-k>2}
\end{algorithm}

Clearly, Algorithm \ref{fig:alg-k>2} runs in polynomial time. 
Note that the solution returned by the algorithm is feasible. Any single-price solution is always feasible, while Algorithm \ref{fig:alg-is}
will produce a price vector that is feasible for $I'$, and therefore for $I$ since the edge restrictions in the two instances are the same.
Even though asymptotically, this is still a logarithmic approximation, the algorithm achieves significantly better results for small values of $k$.

\begin{theorem}\label{thm:alg-k>2}
	Algorithm \ref{fig:alg-k>2} achieves a $\frac{1}{H_{v_{\max}} - 0.5}$-approximation ratio for Inequity Aversion Pricing when the available prices are $\{\perp, 1,2, \cdots,k\}$, with $k\geqslant 2$.
\end{theorem}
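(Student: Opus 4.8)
The plan is to bound $\OPT$ from above by $(H_{v_{\max}}-\tfrac14)\cdot\ALG$, which immediately gives the claimed ratio. I will work with two benchmarks: the usual $\MAX=\sum_{v}val(v)$ and its truncated analogue $\MAX'=\sum_{v}\min\{val(v),2\}$ (the $\MAX$ of the collapsed instance $I'$). We may assume $v_{\max}\geq 2$, since if all values equal $1$ the single price $1$ is already optimal. Let $S$ be the minimum vertex cover of $G'$ computed when Algorithm~\ref{fig:alg-vc} runs on $I'$. Both solutions that Algorithm~\ref{fig:alg-k>2} compares are feasible for $I$, so $\ALG\geq\max\{R_{\SP},\,\MAX'-val'(S)\}$, where the second term is the revenue on $I$ of the vertex-cover solution (nodes of value $1$ outside $S$ pay $1$, nodes of value $\geq 2$ outside $S$ pay $2$, so its revenue is indeed $\MAX'-val'(S)$ with $val'(v)=\min\{val(v),2\}$).

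The first key step is an upper bound $\OPT\leq\MAX-|S|$ on the \emph{original} instance, proved exactly as Claim~\ref{claim:mvc} but with the true values. Take a maximum matching $M$ of $G'$; by the K\H{o}nig--Egerv\'ary Theorem $|M|=|S|$, and every $(u,v)\in M$ has $val(u)\geq 2$, $val(v)=1$ and $\alpha(u,v)=0$, hence forces $p_u\leq p_v$. As in Claim~\ref{claim:mvc}, any feasible solution extracts at most $val(u)=val(u)+val(v)-1$ from each such pair, and since the matched pairs are vertex-disjoint the solution loses at least $|M|=|S|$ relative to $\MAX$. (Note I cannot instead invoke $\OPT\leq\OPT'$, because truncating values \emph{lowers} the optimum; this is why the bound must be re-derived against $\MAX$.) Combining $\ALG\geq\MAX'-val'(S)$ with $val'(S)\leq 2|S|$ gives $|S|\geq\tfrac12(\MAX'-\ALG)$, and therefore
\[
\OPT\ \leq\ \MAX-|S|\ \leq\ \MAX-\tfrac12(\MAX'-\ALG)\ =\ \tfrac{\MAX'}{2}+(\MAX-\MAX')+\tfrac{\ALG}{2}\,.
\]

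The second key step reconciles the two benchmarks through a harmonic decomposition. Writing $R_p=p\cdot|\{v:val(v)\geq p\}|$ for the price-$p$ single-price revenue, a double count gives $\MAX=\sum_{p=1}^{v_{\max}}R_p/p$, so splitting off $p=1,2$ yields $\MAX=R_1+\tfrac12R_2+\sum_{p=3}^{v_{\max}}R_p/p$, while one checks directly that $\MAX'=R_1+\tfrac12R_2$. Since $R_p\leq R_{\SP}$ for every $p$, this single identity delivers all three facts I need at once: the bound $R_{\SP}\geq\MAX/H_{v_{\max}}$ of Theorem~\ref{thm:h_n}, the excess bound $\MAX-\MAX'=\sum_{p=3}^{v_{\max}}R_p/p\leq R_{\SP}(H_{v_{\max}}-\tfrac32)$, and $\MAX'=R_1+\tfrac12R_2\leq\tfrac32R_{\SP}$.

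Finally I substitute these into the displayed inequality: $\tfrac{\MAX'}{2}\leq\tfrac34R_{\SP}$ and $\MAX-\MAX'\leq(H_{v_{\max}}-\tfrac32)R_{\SP}$ give $\OPT\leq(H_{v_{\max}}-\tfrac34)R_{\SP}+\tfrac{\ALG}{2}\leq(H_{v_{\max}}-\tfrac14)\ALG$, using $R_{\SP}\leq\ALG$ together with $H_{v_{\max}}-\tfrac34>0$ for $v_{\max}\geq 2$; the coefficients $\tfrac34+\tfrac12-\tfrac32=-\tfrac14$ combine to exactly the claimed $-\tfrac14$. \textbf{The main obstacle} is precisely this reconciliation: the single-price guarantee is naturally stated against $\MAX$ whereas the vertex-cover guarantee lives against $\MAX'$, and a naive bound such as $\MAX'\geq\tfrac{2}{v_{\max}}\MAX$ loses far too much. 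The harmonic decomposition is what links the two benchmarks cleanly, letting the vertex-cover loss and the single-price loss add up to improve $H_{v_{\max}}$ by exactly $\tfrac14$; with it the result drops out directly, with no need for the interval-subdivision case analysis used in Theorem~\ref{vcsolution}.
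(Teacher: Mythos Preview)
Your proof is correct and takes a genuinely different route from the paper. The paper proceeds by induction on $v_{\max}$: it uses Theorem~\ref{vcsolution} as a black box for the base case $v_{\max}=2$, and in the inductive step it caps all values at $v_{\max}-1$, compares $\ALG$ on the two instances, and splits into cases according to whether $|V_{v_{\max}}|$ is large (so that the single price $v_{\max}$ is already good) or small (so that the optimum drops little when the top value is trimmed, via an analogue of Claim~\ref{claim:opt'}).

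Your argument is direct rather than inductive. You re-derive the K\H{o}nig--Egerv\'ary bound $\OPT\le\MAX-|S|$ against the \emph{original} $\MAX$ (this is essential, as you correctly note that one cannot use $\OPT\le\OPT'$ here), combine it with $\ALG\ge\MAX'-val'(S)$ to obtain $\OPT\le\tfrac12\MAX'+(\MAX-\MAX')+\tfrac12\ALG$, and then close the gap between the two benchmarks $\MAX$ and $\MAX'$ via the harmonic identity $\MAX=\sum_{p\ge 1}R_p/p$, which neatly isolates $\MAX'=R_1+\tfrac12 R_2$. The arithmetic is clean: $\tfrac12\MAX'\le\tfrac34R_{\SP}$, $\MAX-\MAX'\le(H_{v_{\max}}-\tfrac32)R_{\SP}$, and the remaining $\tfrac12\ALG$ gives exactly $H_{v_{\max}}-\tfrac14$.

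What your approach buys is transparency and self-containment: it shows precisely that the $-\tfrac14$ improvement over $1/H_{v_{\max}}$ comes from the vertex-cover solution handling the first two harmonic levels better than the trivial bound $R_1+\tfrac12R_2\le\tfrac32 R_{\SP}$ would suggest, and it recovers the $k=2$ ratio of $0.8$ without the interval-subdivision and limiting argument used in the paper's proof of Theorem~\ref{vcsolution}. The paper's inductive approach, on the other hand, is more modular (it treats Theorem~\ref{vcsolution} as a black box) and is the template the authors reuse for the general-price-set extension in Theorem~\ref{thm:alg-k>2_gen}.
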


\begin{proof}
	The proof is by induction on $v_{\max}$. For $v_{\max}=2$ the result follows from Theorem~\ref{thm:issolution} since $1 = \frac{1}{H_2 - 0.5}$.
	
	Now assume we have an instance $I$ where $v_{\max}= j > 2$.  As usual, let $\OPT$ denote the optimal revenue for $I$ and $\ALG$ the revenue returned by Algorithm \ref{fig:alg-k>2}. Also, let $R_{j}$ be the revenue extracted by setting price $j$ at every node, and $V_j = \{v \in V: \val(v)=j\}$. We consider two cases. 
	
	\noindent \emph{Case (i):}  $|V_j| \geqslant \frac{1}{\left( H_j - 0.5\right)  j} \cdot \OPT$.
	Then, $\frac{\ALG}{\OPT} \geqslant \frac{R_j}{\OPT} = \frac{j \cdot |V_j|}{\OPT} \geqslant \frac{\frac{1}{H_j - 0.5} \cdot \OPT}{\OPT} = \frac{1}{H_j - 0.5}$. \medskip

	\noindent \emph{Case (ii):} $|V_j| < \frac{1}{\left( H_j - 0.5\right)  j} \cdot \OPT$.
	Let $I^*$ be an instance derived from $I$ by setting $\val^*(v) = \min\{\val(v),\allowbreak j-1\}$, i.e., we only reduce the valuation of the nodes with $\val(v)= v_{\max}$ by $1$.
	Let $\OPT^*$ and $\ALG^*$   denote the optimal revenue and the revenue returned by Algorithm \ref{fig:alg-k>2} respectively, given  $I^*$.
	By the inductive hypothesis we have $\ALG^* \geqslant \frac{1}{H_{j-1} - 0.5} \cdot \OPT^*$.
	
	Furthermore, notice that the set of vertices with valuation greater than $1$ is the same in both instances. So, Algorithm \ref{fig:alg-k>2} on input $I^*$ considers exactly the same price vectors as it does on input $I$, with the exception of the single-price solution that universally uses $j$. We conclude that $\ALG^* \leqslant \ALG$. Next, we prove the following useful claim.

	\begin{myclaim}\label{claim:opt'}
		$\OPT^* \geqslant \OPT -|V_j|$.
	\end{myclaim}
	\begin{proofof}{Proof of Claim \ref{claim:opt'}} 
		Let $\mathbf p$ be an optimal price vector for $I$. Construct the price vector $\mathbf p^*$ by decreasing any price that is at least $j$ to $j-1$.
		It is straightforward to see that in instance $I$ we have $R(\mathbf p^*) \geqslant R(\mathbf p)-|V_j| = \OPT -|V_j|$, while in both instances $R(\mathbf p^*)$ is the same. What is left to show is that $\mathbf p^*$ is feasible  for $I^*$. Observe, however, that the two instances have exactly the same edge restrictions and that $\mathbf p^*$ did not increase the price difference between any two vertices compared to  $\mathbf p$.
		Thus, $\OPT^* \geqslant R(\mathbf p^*) \geqslant \OPT -|V_j|$. 
	\end{proofof}

	\noindent Now, we can write
	\begin{IEEEeqnarray*}{rCl}
		\frac{\ALG}{\OPT} & \geqslant & \frac{\ALG^*}{\OPT} \geqslant \frac{\frac{1}{H_{j-1} - 0.5} \cdot \OPT^*}{\OPT} \geqslant \frac{\frac{1}{H_{j-1} - 0.5} \cdot (\OPT-|V_j|)}{\OPT}\\
		& \geqslant & \frac{1}{H_{j-1} - 0.5} \left(  1 -\frac{ \frac{1}{j(H_{j}  - 0.5)}\cdot \OPT}{\OPT} \right) = \frac{1}{H_{j-1} - 0.5} \cdot   \frac{j H_j - 0.5 j - 1}{j(H_{j}  - 0.5)}   \\
		& = &  \frac{1}{H_{j-1} - 0.5} \cdot  \frac{j (H_{j-1} - 0.5)}{j(H_{j}  - 0.5)} = \frac{1}{H_{j} - 0.5}  \,,
	\end{IEEEeqnarray*}
	which concludes the proof. 
\end{proof}

\section{Approximation Algorithms for General Price Sets}\label{sec:general}

In this section we extend our results when $P$ is an arbitrary set of $k$ positive integers, i.e., $P = \{p_1, p_2, \ldots, p_k\}$. This can be seen as a more realistic model, especially for small values of $k$. In such a case, one could try to directly apply Theorems \ref{thm:h_n}, \ref{thm:issolution}, or \ref{thm:alg-k>2} for $P'=\{1, 2, 3, 4, \ldots , p_k\}$. However, this may produce a very poor approximation when $k$ is small but $p_k$ is large, and feasibility is not guaranteed either. 

In what follows,  $P_j$  denotes $\sum_{i=1}^{j}\frac{p_i - p_{i-1}}{p_i}$, where $p_0=0$.
We begin with a generalization of Theorem \ref{thm:h_n}. 

\begin{theorem}\label{thm:h_n_gen}
	For any number $n$ of agents and possible prices $p_1< p_2 < \ldots < p_k$, the optimal single-price algorithm achieves a $\rho$-approximation, where $\rho=1 / \min \{H_n, P_k \}$, i.e.,
	\[ R_{\SP} \geqslant \frac{\MAX}{\min \{H_n, P_k\}} \geqslant \frac{\OPT}{\min \{H_n, P_k\}} \,, \] 
	and this approximation guarantee is tight.
\end{theorem}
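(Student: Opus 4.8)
The plan is to prove both inequalities and the tightness claim by generalizing the harmonic-number argument that underlies Theorem~\ref{thm:h_n}. Since $\OPT \leqslant \MAX$ always holds, the second inequality $\frac{\MAX}{\min\{H_n,P_k\}} \geqslant \frac{\OPT}{\min\{H_n,P_k\}}$ is immediate, so the real content is the first inequality $R_{\SP} \geqslant \MAX/\min\{H_n,P_k\}$, together with tightness.

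First I would set up the single-price lower bound. For each candidate price $p_j \in P$, let $n_j = |\{v : val(v) \geqslant p_j\}|$ be the number of nodes that yield revenue when charged $p_j$; then the single-price solution at $p_j$ earns exactly $p_j \cdot n_j$, so $R_{\SP} \geqslant \max_j p_j n_j$. The goal is to show this maximum is at least $\MAX / \min\{H_n, P_k\}$, which splits into showing $R_{\SP} \geqslant \MAX/H_n$ and $R_{\SP} \geqslant \MAX/P_k$ separately (taking the better of the two bounds gives the $\min$). For the $H_n$ bound, I would use the classical argument: ordering the nonzero contributions and charging the price achieved by the $i$-th highest valuation node yields revenue at least (that value)$\cdot i$, and averaging/telescoping against the harmonic series $H_n = \sum_{i=1}^n 1/i$ produces $R_{\SP} \geqslant \MAX/H_n$. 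For the $P_k$ bound, the key identity is the telescoping decomposition $\MAX = \sum_{v} val(v) = \sum_{j=1}^{k} (p_j - p_{j-1}) \, n_j$ with $p_0 = 0$, which holds because each node of value $p_j$ is counted in $n_1,\dots,n_j$. Then, using $p_j n_j \leqslant R_{\SP}$, i.e. $n_j \leqslant R_{\SP}/p_j$, I would bound
\[
\MAX = \sum_{j=1}^{k} (p_j - p_{j-1}) n_j \leqslant R_{\SP} \sum_{j=1}^{k} \frac{p_j - p_{j-1}}{p_j} = R_{\SP} \cdot P_k,
\]
which rearranges to exactly $R_{\SP} \geqslant \MAX/P_k$. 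This explains why the quantity $P_k = \sum_{j=1}^k \frac{p_j - p_{j-1}}{p_j}$ is the natural generalization of $H_k$ (when $p_j = j$ each summand is $\frac{1}{j}$ and $P_k = H_k$).

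For tightness, I would exhibit two families of instances, one saturating each branch of the $\min$. For the $P_k$ branch, I would take a graph with enough nodes at each value level so that every single price $p_j$ earns the same revenue $p_j n_j$, forcing $R_{\SP}$ to equal that common value while $\MAX = R_{\SP} \cdot P_k$; concretely, one arranges $n_j$ so that $p_j n_j$ is constant across $j$ (e.g. scale the population at value $p_j$ inversely to $p_j$, clearing denominators). For the $H_n$ branch, I would recall the standard extremal construction of $n$ independent nodes with values tuned so each price level extracts equal revenue — this is the construction implicitly referenced by the tightness claim of Theorem~\ref{thm:h_n} and reused in Proposition~\ref{prop:h_n}. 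Since there are no edges (or only non-restrictive edges), the single-price bound is exactly met and $\OPT = \MAX$, so the ratio is achieved. I expect the main obstacle to be the tightness constructions rather than the inequalities: one must choose integer populations realizing equality in the telescoping bound (which requires $p_j n_j$ to be simultaneously constant, forcing careful integrality/divisibility choices), and one must also verify $\OPT = \MAX$ in those instances so that the approximation ratio is genuinely tight against the optimum and not merely against $\MAX$. The two inequalities themselves reduce to the telescoping identity plus the elementary $n_j \leqslant R_{\SP}/p_j$, and should go through cleanly.
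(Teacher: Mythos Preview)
Your proposal is correct and follows essentially the same approach as the paper: the $P_k$ bound via the telescoping identity $\MAX=\sum_j (p_j-p_{j-1})n_j$ together with $n_j\le R_{\SP}/p_j$, the $H_n$ bound via the classical ordered-values argument, and tightness via instances with non-restrictive edge constraints where every single price yields the same revenue. The only cosmetic difference is notation (the paper writes $a_j$ for the number of nodes with value exactly $p_j$ and then $\sum_{\ell\ge j}a_\ell$, which is your $n_j$), and the paper gives the tight $P_k$ example only for $p_i=i$ whereas you sketch the general recipe; your anticipated integrality issue is exactly what forces the paper's $n!$- and $k!$-scaled constructions.
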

\begin{proof} 
	As in Lemma \ref{number_of_OP}, we know that we only need to consider at most $min\{n, k\}$
	possible prices that correspond to the distinct values of the nodes. 
	Let $a_j$ be the number of vertices with value $p_j$ and $R_i$ be the revenue obtained by setting the price of all nodes equal to $p_i$, i.e.,
	\[a_j = \big|\{ v \in V: \val(v)=p_j\}\big| \text{\ \ and\ \ }R_i = \sum_{v \in V} R_v(i) =  p_i \cdot \sum_{j=i}^n a_j \,.\]

	Recall that $P_k = \sum_{i=1}^{k} \frac{p_i - p_{i-1}}{p_i}$, where $p_0=0$. Let $R_{\SP}$ be the revenue achieved by the optimal one-price algorithm. Then $R_i\leqslant R_{\SP}$, and 
	we have
	\[ \MAX = \sum_{v \in V} \val(v) = \sum_{i=1}^{k} \bigg((p_i - p_{i-1})\cdot\sum_{j=i}^n a_j\bigg) = \sum_{i=1}^{k} \frac{(p_i - p_{i-1})\cdot R_i}{p_i} \leqslant R_{\SP} \cdot P_k \,.\]
	So, we obtain
	\begin{equation}\label{op_lnv}
	R_{\SP} \geqslant \frac{\MAX}{P_k} \geqslant \frac{\OPT}{P_k}\,. 
	\end{equation}
	
	\noindent
	Let us now sort the vertices from $V$ with respect to $\val(v)$ in ascending order, say $v_1,\ldots, v_n$. Let $R_{(i)}$ be the revenue obtained from the vertices $\{v_i,v_{i+1},\dots, \allowbreak v_n\}$ by setting to all of them the price $\val(v_i)$, i.e.,
	\[ R_{(i)} = \mkern-10mu \sum_{v \in \{v_i,v_{i+1},\dots,v_n\}} \mkern-14mu R_v(\val(v_i)) = (n-i+1) \cdot \val(v_i)\,.\]
	Clearly, $R_{(i)} \leqslant R_{\SP}$ and we have 
	\[ \MAX = \sum_{i=1}^n \val(v_i) = \sum_{i=1}^{n} \frac{R_{(i)}}{n-i+1} \leqslant R_{\SP} \cdot \sum_{i=1}^{n} \frac{1}{n-i+1} 
	= R_{\SP} \cdot H_n\,.\]
	Hence, we obtain
	\begin{equation}\label{op_lnn}
	R_{\SP} \geqslant \frac{\MAX}{H_n} \geqslant \frac{\OPT}{H_n}\,. 
	\end{equation}
	Putting inequalities \eqref{op_lnv} and \eqref{op_lnn} together  completes the proof.\medskip
	
	To see that this is tight, consider the following family of graphs.
	For any $n$ take $G(n)$ to be a clique on $\{v_1,v_2,\dots,v_n\}$ and let $\val(v_i)=\frac{n!}{i}$ and $\alpha(u,v)=k = n!$ for every edge. Then
	\[ \OPT = \MAX = \sum_{i=1}^n \frac{n!}{i} = n! H_n \text{\ \ \ and\ \ \ }  R_{\frac{n!}{i}} = \sum_{j=1}^i \frac{n!}{i} = n!\,, \ \   \forall {i \in \{1,2,\dots,n\}}.\]
	Therefore
	\[ \frac{R_{\SP}}{\OPT} = \frac{\max_{i \in \{1,2,\dots,n\}} R_{\frac{n!}{i}}}{n! H_n}  = \frac{1}{H_n}\,.\]
	
	In fact, tightness holds even when $P_k \leqslant H_n$.
	Consider an instance where $p_i=i, \forall i\in k$ and $n=k!$. 
	Define $G(k)$ to be a clique on $\bigcup_{i=1}^{k} V_i$, where $V_i = \{v \in G(k): \val(v)=i \}$ and
	\[ \forall i \in \{1,2,\dots,k-1\}, \quad |V_i| = \frac{k!}{i(i+1)}, \text{\ \ and\ \ } |V_{k}| = \frac{k!}{k} \,.\]
	Like before, $\alpha(u,v)=k$ for every edge. It is easy to verify that $\sum_{i=1}^{k} |V_i| = n$. Then
	\[ \OPT = \MAX = \sum_{i=1}^{k} i \cdot |V_i| = k \cdot \frac{n}{k} + \sum_{i=1}^{k-1} i \cdot \frac{n}{i(i+1)} = n \cdot H_{k} = n \cdot P_k \,, \]
	while
	\begin{IEEEeqnarray*}{rCl}
		\forall {i \in \{1,2,\dots,k\}}, \quad R_{i} & = & \sum_{j=i}^{k} i \cdot |V_j| = i \cdot n \cdot \Bigg( \frac{1}{k} + \sum_{j=i}^{k-1} \frac{1}{j(j+1)} \Bigg)\\
		& = & i \cdot n \cdot \Bigg( \frac{1}{k} + \sum_{j=i}^{k-1} \left( \frac{1}{j} - \frac{1}{j+1} \right) \Bigg) = i \cdot n \cdot \frac{1}{i} = n\,.
	\end{IEEEeqnarray*}
	Therefore,
	\[ \frac{R_{\SP}}{\OPT} = \frac{\max_{i \in \{1,2,\dots,k\}} R_{i}}{n \cdot P_k} = \frac{1}{P_k}\,. \] 
\end{proof}	

\begin{remark}\label{rem:prop1}
	By slightly modifying the tightness examples from the proof of Theorem \ref{thm:h_n_gen} we can show that one cannot achieve a better guarantee than the one suggested by Proposition \ref{prop:h_n} without using discontinuities. 
	In each case, we connect a new vertex $v$ with value $1$ to every vertex $u$ and set $\alpha(u,v)=\alpha(v, u)=0$. The optimal solution is to put a  discontinuity on $v$ and maximize the revenue of every other vertex. When discontinuities are not allowed though, a solution cannot do better than $R_{SP}$, since  all prices have to be equal in such a solution. It is easy to see that we still get the same ratios, 
	namely $\frac{1}{H_{n}}$ and $\frac{n+1}{n\cdot H_{v_{\max}}}$.
\end{remark}

We are now ready to state the generalization of Theorem~\ref{thm:alg-k>2} for arbitrary price sets.

\begin{theorem}\label{thm:alg-k>2_gen}
	Let $P= \{p_1, p_2, \cdots, p_k\}$. There exists a polynomial time $\frac{1}{P_k + \frac{p_1}{p_2}-1}$-approximation algorithm for the Inequity Aversion Pricing problem.
\end{theorem}

\begin{proof}
	The algorithm is very similar to Algorithm \ref{fig:alg-k>2}. The only difference is in the definition of the instance $I'$ (step 1 of Alg. \ref{fig:alg-k>2}). 
	Namely, given an instance $I$, let $I'$ be the new instance where for every $v\in V$,  $\val'(v) = \min\{\val(v), p_2\}$, for all $v\in V$, while the constraints remain the same.

	The proof is by induction on $k$. For $k=2$ the algorithm is equivalent to Algorithm~\ref{fig:alg-is} that gives an exact solution (see Theorem~\ref{thm:issolution}). We have 
	\[\frac{1}{P_k + \frac{p_1}{p_2}-1} = \frac{1}{\frac{p_1-p_0}{p_1} + \frac{p_2-p_1}{p_2} + \frac{p_1}{p_2}-1} = 1. \]
	
	Now assume we have an instance $I$ where $k> 2$. We use the notation of the proof of Theorem \ref{thm:alg-k>2}, but now $R_{k}$ is the revenue extracted by setting price $p_k$ at every node, and $V_k = \{v \in V: \val(v)=p_k\}$.
	
	\noindent \emph{Case (i):}  $|V_k| \geqslant \frac{1}{\left( P_k + \frac{p_1}{p_2}-1\right)  p_k} \cdot \OPT$.
	Then, $\frac{\ALG}{\OPT} \geqslant \frac{R_k}{\OPT} = \frac{p_k \cdot |V_k|}{\OPT} \geqslant  \frac{1}{P_k + \frac{p_1}{p_2}-1}$.

	\noindent \emph{Case (ii):} $|V_k| < \frac{1}{\left( P_k + \frac{p_1}{p_2}-1\right)  \cdot p_j} \cdot \OPT$.
	Let $I^*$ be an instance derived from $I$ by setting $\val^*(v) = \min\{\val(v), p_{k-1}\}$. 
	By the inductive hypothesis we have $\ALG^* \geqslant \frac{1}{P_{k-1} + \frac{p_1}{p_2}-1} \cdot \OPT^*$.
	It is easy to see that $\ALG^* \leqslant \ALG$. Also, we can prove an analog of Claim \ref{claim:opt'}, namely $\OPT^* \geqslant \OPT -(p_k-p_{k-1})|V_k|$.
	Putting everything together we have
	\begin{IEEEeqnarray*}{rCl}
		\frac{\ALG}{\OPT} & \geqslant & \frac{\ALG^*}{\OPT} \geqslant \frac{\frac{1}{P_{k-1} + \frac{p_1}{p_2}-1} \cdot \OPT^*}{\OPT} \geqslant \frac{\frac{1}{P_{k-1} + \frac{p_1}{p_2}-1} \cdot (\OPT-(p_k-p_{k-1})|V_k|)}{\OPT}\\
		& > & \frac{1}{P_{k-1} + \frac{p_1}{p_2}-1} \left(  1 -\frac{ \frac{p_k-p_{k-1}}{p_k(P_{k}  + \frac{p_1}{p_2}-1)}\cdot \OPT}{\OPT} \right)\\
		& = & \frac{1}{P_{k-1} + \frac{p_1}{p_2}-1} \cdot   \frac{\frac{p_k}{p_k-p_{k-1}} P_k + \frac{p_k}{p_k-p_{k-1}}(\frac{p_1}{p_2}-1)  - 1}{\frac{p_k}{p_k-p_{k-1}}(P_k  + \frac{p_1}{p_2}-1)}   \\
		& = &  \frac{1}{P_{k-1} + \frac{p_1}{p_2}-1} \cdot  \frac{\frac{p_k}{p_k-p_{k-1}} (P_{k-1} + \frac{p_1}{p_2}-1)}{\frac{p_k}{p_k-p_{k-1}}(P_k  + \frac{p_1}{p_2}-1)} = \frac{1}{P_k + \frac{p_1}{p_2}-1}  \,,
	\end{IEEEeqnarray*}
	which concludes the proof.
\end{proof}

Table 1 summarizes the approximation ratios obtained by three algorithms: the optimal single price solution, Algorithm \ref{fig:alg-k>2}, as well as the algorithm implied by Theorem \ref{thm:alg-k>2_gen} for different sets of prices. The positive news is that Theorem \ref{thm:alg-k>2_gen} is resistant to price scaling. For example, the price set $\{10,20,30\}$ can be seen as simply $\{1,2,3\}$ scaled up, and Theorem \ref{thm:alg-k>2_gen} yields the same ratio as Algorithm \ref{fig:alg-k>2} does for $P = \{1,2,3\}$. On the other hand, the $1/H_k$-factor algorithm gives a much worse approximation when $P = \{10,20,30\}$. 

\begin{table}[h]
	\centering
	\label{approx_ratios}
	\begin{tabular}{|c|c|c|c|c|c|}
		\hline 
		$P$ & \{1, 2\}  & \{1, 2, 3\} & {\footnotesize\{1, \ldots , 100\}} & \{10, 20, 30\} & \{70, 80, 90, 100\}  \\
		\hline
		$1/ H_k$  & 0.666 & 0.545 & 0.192 & 0.286 & 0.192  \\
		\hline
		Alg. \ref{fig:alg-k>2} & 1 & 0.750 & 0.213 & -- & -- \\
		\hline
		\footnotesize Thm. \ref{thm:alg-k>2_gen}  & 1 & 0.750 & 0.213 & 0.750 & 0.825 \\
		\hline
		
	\end{tabular}
	\vspace{0.2cm}
	\caption{Examples of obtained approximation ratios.}
\end{table} 

\section{Hardness for Single Value Revenue Functions \label{sec:hard}}
In \cite{AMT13} there is an $n^{1-\varepsilon}$ inapproximability result for Inequity Aversion Pricing, but for general revenue functions and $\alpha(u, v)=1$ for every edge.
An NP-hardness proof is also given for these edge constraints when single value and constant revenue functions are allowed. The NP-hardness of Inequity Aversion Pricing as we study it here, i.e., allowing only single value revenue functions, was left as an open question. We resolve this question by proving that the problem remains NP-complete even if we restrict the revenue functions to be single value. Our reduction implies that
the result holds even when the price differences are allowed to be quite large and
close to the maximum possible price $k$ (which could presumably make the problem easier). Further, when $\alpha(u, v)=0$ for every edge, we are able to show APX-hardness, as well as NP-hardness even for $k=3$, in contrast to the case of $k=2$.  

The reduction, below, is from the decision version of \emph{3-Terminal Node Cut}: Given a graph $G(V,E)$, 
a set $S= \{v_1, v_2, v_3\}\subseteq V$, and an integer $q$, is there a subset of $q$ vertices that 
can be deleted, so that any two vertices of $S$ are in different connected components of 
the resulting graph? 
The NP-completeness of the weighted version of 3-Terminal Node Cut is discussed in 
\cite{Cun89}, while the APX-hardness of the unweighted version we use here  is discussed in \cite{GVY04}.
The NP-completeness result we need follows from Theorem \ref{MNC-hard} as well (see the discussion before the statement of Theorem \ref{MNC-hard}).

\begin{theorem}\label{np-comp}
	Let $\varepsilon > 0$ be any small constant. 
	The decision version of Inequity Aversion Pricing for single value revenue functions is NP-complete, even when $\alpha(u, v)$ is as large as $k^{1-\varepsilon}$  for all $(u, v)\in E(G)$, where $k$ is the maximum possible price.
\end{theorem}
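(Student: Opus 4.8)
The plan is to reduce from \emph{3-Terminal Node Cut}, whose decision (and optimization) versions are NP-complete and APX-hard respectively, as cited. Given an instance $(G, S=\{v_1,v_2,v_3\}, q)$, I would construct a pricing instance whose graph is essentially $G$ (possibly with a small gadget attached to the three terminals), where the idea is that a node assigned a discontinuity in the pricing solution corresponds exactly to a deleted vertex in the node-cut solution. The key conceptual correspondence is: separating the three terminals into distinct connected components via vertex deletions should mirror the pricing solution's need to partition $V$ into ``price regions'' where neighboring prices differ by more than the allowed $\alpha$, which is only possible across a discontinuity. So I would force the three terminals to receive three \emph{distinct}, far-apart prices (say by giving them values that pin them to specific high prices, or attaching private gadgets that make any other choice strictly suboptimal), and tune the edge constraints $\alpha(u,v)$ so that two neighbors must either share a price band or have a discontinuity between them.

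The technical heart is choosing the prices and the bound $\alpha(u,v)=k^{1-\varepsilon}$ correctly. First I would set up the price set $P=\{1,\dots,k\}$ with $k$ chosen so that $k^{1-\varepsilon}$ is the common edge bound, and assign the three terminals values/prices that are pairwise separated by strictly more than $k^{1-\varepsilon}$ (three ``zones'' at roughly $0$, $k/2$, and $k$ apart by more than the allowed difference). With this spacing, along any path in $G$ connecting two terminals, the feasibility constraint forces the price to change gradually by at most $\alpha$ per edge unless a discontinuity is introduced; hence to host two terminals with price gap exceeding $\alpha$ in the \emph{same} connected component (after deleting the $\bot$-nodes) is infeasible. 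Therefore a feasible price vector that realizes all three terminal prices must place discontinuities on a set of internal vertices that disconnects the three terminals from each other — precisely a 3-terminal node cut. Conversely, any node cut of size $q$ yields a feasible pricing where the three components get the three zone prices and the cut vertices get $\bot$.

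Next I would calibrate the revenue accounting so that the decision threshold on total revenue is equivalent to the threshold $q$ on the cut size. Setting $val(v)=k$ (or the appropriate zone value) for non-terminal nodes, a node priced within its feasible band contributes a fixed positive revenue while a $\bot$-node contributes $0$; thus maximizing revenue is equivalent to minimizing the number (or weight) of discontinuities, i.e.\ the cut. I would pick the revenue target $T$ so that ``revenue $\ge T$'' holds iff ``a node cut of size $\le q$ exists,'' completing the equivalence and giving NP-completeness (membership in NP is immediate since feasibility and revenue of a given price vector are checkable in polynomial time). For APX-hardness in the $\alpha\equiv 0$ case, I would observe that with $\alpha=0$ every edge forces equal prices or a discontinuity, making the correspondence between revenue loss and cut size exact and gap-preserving, so the APX-hardness of unweighted 3-Terminal Node Cut transfers.

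The main obstacle I expect is the spacing argument under the large bound $\alpha=k^{1-\varepsilon}$: I must ensure that \emph{no} clever feasible assignment can satisfy the terminal price constraints without a genuine node cut, i.e.\ that gradual price changes along paths cannot ``sneak'' two far-apart terminals into the same component. This requires making $k$ large relative to the graph diameter (so that $k^{1-\varepsilon}$ times the path length still cannot bridge the zone gaps) or, alternatively, subdividing edges / controlling path lengths so that the per-edge slack $\alpha$ accumulated along any terminal-to-terminal path stays below the required zone separation. Getting this quantitative balance right — large enough $k$ to permit $\alpha=k^{1-\varepsilon}$ while small enough relative slack to force the cut — is the delicate step, and it is exactly what lets the hardness survive even for edge bounds close to the maximum price $k$.
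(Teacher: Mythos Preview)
Your high-level plan---reduce from 3-Terminal Node Cut, assign the three terminals to three well-separated ``price zones,'' and argue that feasibility then forces the discontinuity set to be a node cut---matches the paper's approach, and your identification of the spacing issue (making $k$ large enough that $n\cdot k^{1-\varepsilon}$ is still dominated by the zone gaps) is exactly the right concern. So the skeleton is sound.

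There is, however, one genuine gap. You write that you would ``force the three terminals to receive three distinct, far-apart prices'' by giving them suitable values or ``attaching private gadgets,'' but you do not say how, and this is the crux. With single-value revenue functions nothing \emph{forces} a vertex to take any particular price: the seller can always set $\bot$ on a terminal itself. If the three terminals are single vertices, the trivial pricing that puts $\bot$ on (say) two of the terminals separates everything at the cost of only $O(k)$ revenue---cheaper than using $q$ internal discontinuities whenever $q$ is large. Your revenue threshold $T$ would then fail to encode the cut size. An attached pendant gadget does not help either: if the gadget hangs off the terminal, one can $\bot$ the terminal and still price the gadget optimally, since it becomes an isolated component.

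The paper's fix is a specific construction you are missing: replace each terminal $v_i$ by a \emph{bundle} of $n^3$ vertices, all sharing $v_i$'s original neighborhood, and give every vertex in the $v_i$-bundle the value $n^3+\tfrac{i-1}{2}n^2$ (with $k=n^3+n^2$). Now ``putting $\bot$ on a terminal'' means putting $\bot$ on all $n^3$ copies, which costs far more revenue than any conceivable savings on the $n-3$ internal vertices; and pricing a bundle below its own value likewise costs $\Theta(n^5)$. This simultaneously pins each bundle to its designated zone and rules out the cheap ``cut the terminals'' escape. Once you have the bundles, the spacing argument you outline goes through: any path between two bundles has length at most $n$, so prices can drift by at most $n\cdot\alpha \le n\cdot k^{1-\varepsilon}$, which (after choosing $k$ polynomially large in $n$, e.g.\ multiplying all quantities by $n^c$ with $c>4/\varepsilon$) is dwarfed by the $\Theta(n^{2})$ gap between zones.

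In short: your plan is the right one, but the reduction does not work without the bundle trick (or an equivalent device) to make it revenue-infeasible to discontinue the terminals themselves.
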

\begin{proof}
	It is immediate that the problem is in NP.
	To facilitate the presentation, we prove the NP-hardness when $\alpha(\cdot, \cdot)$ 
	is upper bounded by $k^{1/3}/3$. 
	As discussed at the end of the proof, the reduction can be easily adjusted when the upper bound 
	of $\alpha(\cdot, \cdot)$ is $k^{1-\varepsilon}$, for constant $\varepsilon$.
	
	Let us consider an instance of 3-Terminal Node Cut, i.e., a graph $G(V,E)$, with $|V(G)|=n$, a set 
	$S=\{v_1, v_2, v_3\}$ of non adjacent vertices of $G$, and an integer $q$. We may assume that $q\leqslant n-3$, otherwise the question is trivial. Next we give a construction
	of an appropriate instance for Inequity Aversion Pricing. 
	
	Let $H$ be the graph obtained from $G$ as follows. We replace every vertex $v\in S$ 
	by $n^3$ vertices, where each such vertex has the same neighbors as $v$, i.e., if 
	$u_v$ is a vertex in the bundle of vertices replacing $v$, then for every edge $(v, x)\in E(G)$ we add the edge $(u_v, x)$ to $E(H)$. 
	For any $v\in S$, we call such a set of vertices in $H$ a $v$-bundle.  
	The set of prices is $\{\perp, 1,\allowbreak 2, \ldots, k\}$, where $k= n^3+n^2$.
	Finally, for any $(u, v)\in E(H)$ we set $\alpha(u, v)$ and $\alpha(v, u)$ arbitrarily, as long as they are at most $k^{1/3}/3$.
	Note that $|V(H)|= n - 3 + 3 n^3$, and $|E(H)| \leqslant  |E(G)| + 3(n-1)n^3 \leqslant 3n^{4}$.
	
	Next we define the single value revenue functions for the vertices of $H$. For every $v\in V(G)\setminus S$, let $\val(v)=n^3 + n^2$, and for every $v_i\in S$, let $\val(u_{v_i})=n^3+\frac{i-1}{2}n^2$
	for all $u_{v_i}$ in the $v_i$-bundle. 
	We show below that $G$ has a subset of at most $q$ vertices that separate all the 
	vertices of $S$, if and only if there is a feasible choice of prices for
	the vertices of $H$ that gives revenue at least $R_q$, where
	$R_q = \left( n-3 -q\right) n^3 + \sum_{i=1}^{3} n^3\left( n^3+\frac{i-1}{2}n^2\right)$. 
	
	One direction is easy. Let $A$ be a subset of at most $q$ vertices of $G$ that 
	separate the three vertices of $S$. For all $v\in A$  we put a discontinuity on the corresponding $v$ in $H$. If we
	think of these vertices as removed from $H$, this creates several connected components. For any other
	vertex $u\in V(H)$, if $u$ is in the same component as some $v_i$-bundle (or itself is one of the vertices of the $v_i$-bundle), set its price to
	$n^3+\frac{i-1}{2}n^2$, otherwise set its price to $n^3+n^2$. Notice that any vertex without a 
	discontinuity produces revenue at least $n^3$, while any vertex $u_{v_i}$ in a $v_i$-bundle 
	with $v_i\in S$ produces revenue exactly $n^3+\frac{i-1}{2}n^2$. 
	Now, it is straightforward to check that this price vector $\mathbf p$ is feasible 
	and gives enough revenue:
	$R(\mathbf p) =  \sum_{u\in V(H)}  R(u) \geqslant \left( n-3 -q\right) n^3 + \sum_{i=1}^{3} n^3\left( n^3+\frac{i-1}{2}n^2\right) = R_q $.
	
	For the opposite direction we begin with a couple of observations. 
	Assume that there is  a price vector $\mathbf p_*$ that gives revenue at least $R_q$. 
	We claim that $ \mathbf p_*$ can have only a few discontinuities.
	
	\begin{myclaim}\label{claim:np-1}
		There is no feasible price vector $\mathbf p$ with $R(\mathbf p)\geqslant R_q$ 
		and more than $q$ discontinuities.
	\end{myclaim}
	\begin{proofof}{Proof of Claim \ref{claim:np-1}} 
		Let $\mathbf p$ be a feasible price vector with at least 
		$q+1$ discontinuities. Notice that any vertex without a discontinuity produces revenue 
		at most $n^3 + n^2$ and, in particular, any vertex $u_{v_i}$ in a $v_i$-bundle 
		with $v_i\in S$ produces revenue at most $n^3+\frac{i-1}{2}n^2$. The maximum possible revenue for $\mathbf p$ is
		\begin{IEEEeqnarray*}{rCl}
			R(\mathbf p) & \leqslant & \left( n-3 \right) \left( n^3 + n^2\right)  + \sum_{i=1}^{3} n^3\left( n^3+\frac{i-1}{2}n^2\right) - (q+1) n^3  \\
			& = &  R_q + \left( n-3 \right)n^2 - n^3 < R_q \,, 
		\end{IEEEeqnarray*}    
		thus proving the claim. 
	\end{proofof}

	One immediate implication of Claim \ref{claim:np-1} is that for any $v\in S$ not every vertex in the 
	$v$-bundle has price $\perp$. This holds because the  $v$-bundle has $n^3$ vertices 
	and only $q \leqslant n-3$ of them can get $\perp$. This is crucial, because if we think
	of the vertices with price $\perp$ as removed from $H$, then no two vertices are separated 
	because of discontinuities in the $v$-bundles. In particular, we can completely ignore those
	discontinuities with respect to connectivity.

	Let $D_{\mathbf p} = \{v\in V(G)\setminus S\ |\ \mathbf p_{v} =\, \perp\}$, i.e., $D_{\mathbf p}$ 
	is the set of non terminal vertices in $G$ that their corresponding vertices in $H$ have 
	discontinuities in $\mathbf p$.
	So far, by Claim \ref{claim:np-1}, we have that $|D_{\mathbf p_*}|\leqslant q$. What is left to 
	be shown is that these discontinuities separate the $v$-bundles, for any $v\in S$.
	
	\begin{myclaim}\label{claim:np-2}
		There is no feasible price vector $\mathbf p$ such that 
		$R(\mathbf p)\geqslant R_q$, and  for some $v_i, v_j\in S$ vertices from both the 
		$v_i$-bundle and the $v_j$-bundle are in the same connected component of the graph $H' = H-\{v \in V(H)\ |\ v \text{ is not in a} \allowbreak \text{bundle and } \allowbreak \mathbf p_{v} =\, \perp\}$.
	\end{myclaim}
	\begin{proofof}{} 
		Let $\mathbf p$ be a feasible price vector 
		and assume that there exist $v_i, v_j\in S$ such that vertices from both the $v_i$-bundle 
		and the $v_j$-bundle are in the same component of $H'$. First notice that all the vertices in  the $v_i$-bundle and the $v_j$-bundle 
		are in the same component, since vertices in a bundle share the same neighbors. We are going 
		to upper bound the maximum possible revenue for such a price vector. W.l.o.g., assume $i<j$. 
		If all the vertices in the $v_i$-bundle are assigned prices in
		$\{\perp, n^3+ \frac{i-1}{2} n^2 +1, \ldots, k\}$, then they contribute $0$ to the total revenue.
		On the other hand, if there is some vertex in the $v_i$-bundle with price at most
		$n^3+ \frac{i-1}{2} n^2$, then by the feasibility of $\mathbf p$ we have that
		any vertex in the $v_j$-bundle has its revenue upper bounded by $n^3+\frac{i-1}{2} n^2+ \frac{ k^{1/3}}{3}n$.
		To see the latter, notice that if any two vertices from 
		two distinct bundles are connected by a path, then this path has length at most $n$ 
		(like it would in $G$) and therefore their prices can differ by $\frac{ k^{1/3}}{3}n$ at most. 
		We conclude that the loss, compared to the sum of the maximum revenues per vertex, is 
		lower bounded by either $n^3\left( n^3+\frac{i-1}{2}n^2\right)$ or 
		$n^3\left( \frac{j-i}{2}n^2 - \frac{ k^{1/3}}{3}n\right)$
		and therefore by $n^3\left( \frac{1}{2}n^2 - \frac{ k^{1/3}}{3}n\right)$.
		For $n\geqslant 10$, we have 
		\begin{IEEEeqnarray*}{rCl}
			n^3\left( \frac{n^2}{2} -  \frac{\left( n^3 + n^2\right)^{1/3}n}{3}\right) & \geqslant & n^3\left( \frac{n^2}{2} - \frac{1.1^{1/3}n^2}{3}\right) > 0.15 n^2 n^3 \\
			& > & (n-2) n^3 \geqslant (q+1) n^3 \,
		\end{IEEEeqnarray*}
		and we get $R(\mathbf p)<R_q$ in exactly the same way as in the proof of Claim \ref{claim:np-1}. 
	\end{proofof}

	We conclude that $D_{\mathbf p_*}$ is a set of at most $q$ vertices of $G$ that 
	separate all the vertices of $S$. This completes the proof for the case 
	where $\alpha(\cdot, \cdot)$ is upper bounded by $k^{1/3}/3$.
	
	The above reduction, however, generalizes for $\alpha(\cdot, \cdot)$ upper bounded by $k^{1-\varepsilon}$ for any positive constant $\varepsilon$. Let $c\in \mathbb N$ with
	$c > 4/\varepsilon$. 
	If we multiply by $n^c$ all the relevant quantities, i.e., the size of the bundles, $k$, 
	$R_q$, and $\val(v)$ for all $v\in V(H)$, then
	the reduction is identical up to the last part of the proof of Claim \ref{claim:np-2}. Now, 
	the  loss is lower bounded by $n^{c+3}\left( n^{c+2}/2 - n k^{1-\varepsilon}\right)$ and
	it suffices for this quantity to be at least $(q+1)n^{c+3}$ for things to work out.
	So, we need $ n^{c+2}/2 - n k^{1-\varepsilon} \geqslant n-2$ (since $n-2 \geqslant q+1$),
	and it is only a matter of simple calculations to check that this holds. 
\end{proof}

For the special case where all the  differences are $0$, we  show that the problem is 
APX-hard. 
In doing so, we prove that 3-Terminal Node Cut is MAX SNP-hard, and thus APX-hard. 
As noted already, MAX SNP-hardness of 3-Terminal Node Cut is discussed ---but not explicitly proved--- in \cite{GVY04}. Here, having this reduction is crucial, and we have therefore obtained an explicit construction, since eventually we need to show that 3-Terminal Node Cut restricted in a specific set of instances is MAX SNP-hard (Corollary \ref{cor:3TNC-G}).

\begin{theorem}\label{MNC-hard}
	Multi-Terminal Node Cut is MAX SNP-hard even for $3$ terminals and all the weights equal to $1$.
\end{theorem}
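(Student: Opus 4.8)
The plan is to prove MAX SNP-hardness of 3-Terminal Node Cut via an L-reduction (or PTAS-preserving reduction) from a known MAX SNP-hard problem. The natural source problem is some variant of \textsc{Vertex Cover} or \textsc{Node Multiway Cut} restricted to bounded-degree graphs, where APX-hardness is already established. Since the statement fixes the number of terminals to three and all weights to one, I would reduce from a bounded-degree APX-hard problem so that the sizes of the optimal solutions on both sides are linearly related, which is exactly what an L-reduction requires: constructible maps $f$ (instances) and $g$ (solutions) with $\OPT(f(I)) \leqslant \beta \cdot \OPT(I)$ and $|\,\mathrm{cost}(g(I,s)) - \OPT(I)\,| \leqslant \gamma \cdot |\,\mathrm{cost}(s) - \OPT(f(I))\,|$ for absolute constants $\beta, \gamma$.

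\medskip
\noindent First I would set up the gadget construction. The terminals $v_1, v_2, v_3$ must be separated from one another by deleting vertices, so I would encode each clause/edge constraint of the source instance into a small subgraph wired to the three terminals in a way that forces any small node cut to make a combinatorial choice mirroring the source problem's feasibility. The key is that separating all three pairs is equivalent to a covering/selection condition on the gadget vertices, so a cut of size $q$ corresponds to a source solution of a linearly related cost. Because all weights equal $1$, I would make every gadget vertex unit weight and control the cut size purely through the graph structure; to prevent the cut from ever ``cheating'' by deleting a terminal or a high-degree hub, I would replace each terminal by a bundle of many parallel copies (the same bundling trick used in the proof of Theorem~\ref{np-comp}), so that cutting through a terminal bundle is prohibitively expensive and the optimal cut is forced to live on the genuine separator vertices.

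\medskip
\noindent Next I would verify the two L-reduction inequalities. The forward direction (a source solution yields a node cut of controlled size) is the gadget-by-gadget translation and is routine once the gadgets are fixed. The reverse direction (any node cut can be cleaned up into one that respects the gadget structure, at no extra cost, and hence decodes to a source solution) is where the argument has real content: I would argue that any minimal cut can be normalized — e.g., never deletes a bundled terminal copy, and within each gadget deletes only a canonical set of vertices — and that this normalization only decreases the cut size. This gives $\mathrm{cost}(g(I,s)) \leqslant \OPT(I) + \bigl(\mathrm{cost}(s) - \OPT(f(I))\bigr)$ with $\gamma = 1$, and the linear bound on optima follows from bounded degree.

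\medskip
\noindent \textbf{The main obstacle} I anticipate is the reverse/normalization step: showing that \emph{every} feasible node cut, not just the structured ones produced by the forward map, can be decoded into a source solution without blowing up the cost by more than a constant factor. A node cut need not decompose cleanly along gadget boundaries — a single deleted vertex might participate in several separation requirements at once — so I would have to prove a local-replacement lemma guaranteeing that any cut can be transformed into a gadget-respecting cut of no greater size. Getting this cleanup to be \emph{cost-nonincreasing} (so that $\gamma$ is an absolute constant and the approximation gap transfers), rather than merely cost-bounded, is the delicate part, and it is precisely what upgrades NP-hardness to APX-hardness. Once this lemma is in place, MAX SNP-hardness follows immediately, and Corollary~\ref{cor:3TNC-G} will then refine it to the specific family of instances needed for the APX-hardness of Inequity Aversion Pricing.
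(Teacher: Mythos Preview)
Your plan has a genuine gap at the very first step: the choice of source problem. Reducing from Node Multiway Cut is circular, since that \emph{is} the problem you are trying to prove hard; and for Vertex Cover you never say what gadget encodes ``$v$ covers edge $e$'' as a separation requirement among three fixed terminals, and no standard such gadget exists. Without a concrete source problem and gadget, the normalization lemma you highlight as the main obstacle cannot even be formulated, let alone proved cost-nonincreasing.

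The paper's route avoids this by reducing from the \emph{edge} version, 3-Terminal Cut, whose MAX SNP-hardness is already established in \cite{DJPSY94}. The construction is short: subdivide every edge $e$ of $G$ by a new vertex $v_e$, then replace every original vertex $v$ (not just the terminals) by a bundle of $\deg_G(v)+1$ copies sharing $v$'s neighbors. The point of bundling \emph{all} vertices, with a degree-dependent size, is that removing an entire $v$-bundle costs strictly more than removing all of its $\deg_G(v)$ neighbors among the $v_e$'s; this is exactly the local-replacement lemma you anticipated, and it makes the cleanup step a two-line argument that pushes any node cut onto the subdivision vertices at no extra cost. One then reads off an edge cut of the same size in $G$, giving $c_\alpha=c_\beta=1$. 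Your terminal-only bundling (borrowed from Theorem~\ref{np-comp}) would not suffice here, because a node cut could then sit on non-terminal original vertices, which correspond to nothing in the edge-cut instance.
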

\begin{proof}
	We prove the result for $3$ terminals. The extension to more follows immediately.
	Proofs of MAX SNP-hardness involve linear reductions. Let $A$ and $B$ be two optimization problems. We say that $A$ linearly reduces to $B$ if there are two polynomial time computable functions $f$ and $g$ and constants $c_{\alpha}, c_{\beta} >0$ such that
	\begin{itemize}
		\item Given an instance $a$ of $A$, $f$ produces an instance $b=f (a)$ of $B$ such that $\OPT_B(b) \leqslant c_{\alpha} \OPT_A(a)$, and
		\item Given $a$, $b=f (a)$, and any solution $y$ of $b$, $g$ produces a solution $x$ of $a$ such that $|cost_A(x) - \OPT_A(a)| \leqslant c_{\beta}|cost_B(y) - \OPT_B(b)|$.
	\end{itemize}
	
	The reduction is from the unweighted version of \emph{3-Terminal Cut}: Given a graph $G(V,E)$ and 
	a set $S= \{v_1, v_2, v_3\}\subseteq V$, find a minimum cardinality set of edges that 
	can be deleted, so that any two vertices of $S$ are in different connected components of 
	the resulting graph.
	3-Terminal Cut was shown to be MAX SNP-hard in \cite{DJPSY94} even 
	when all the weights equal to $1$, which is essentially the unweighted version defined above. 
	
	Consider an instance of 3-Terminal Cut, i.e., a graph $G(V,E)$ with $|V(G)|=n$ and 
	a set of non adjacent terminals $S=\{v_1, v_2, v_3\}$. 
	We first describe the function $f$ in the definition of the linear reduction. 
	Let $H$ be the graph obtained from $G$ as follows:
	\begin{enumerate}
		\item Replace each edge $e$ by a path of length two, the middle vertex of which  
		we denote by $v_e$. 
		\item Replace every ``old'' vertex $v$ by a $v$-bundle of $\deg_G(v)+1$ vertices 
		(see also the proof of Theorem \ref{np-comp}), 
		where each such vertex has the same neighbors as $v$ in the graph constructed at step $1$. 
		That is, put an edge between $u_v$ and $v_e$ if $u_v$ is a vertex in the $v$-bundle 
		and $e$ is incident to $v$.
	\end{enumerate}
	Also, let $S'= \{u_1, u_2, u_3\}$, where $u_i$ is an arbitrarily chosen 
	vertex from the $v_i$-bundle. Define $f\left( (G, S)\right)  = \left( H, S'\right) $. Clearly, $f$ is 
	polynomial time computable.
	
	Next we define the function $g$ in the definition of the linear reduction.
	Given a vertex cut $Y$ in $H$ that separates the vertices in $S'$, 
	first we transform it to an appropriate vertex cut $Y'$ that separates the 
	vertices in $S'$ and contains no vertices from any $v$-bundle.
	\begin{enumerate}
		\item While there is a whole $v$-bundle contained in the vertex cut, remove those vertices from the cut and add all of their neighbors instead.
		\item While there is some vertex from a $v$-bundle in the cut, just remove this vertex from the cut.
	\end{enumerate}
	Notice that in one iteration of step $1$ the connectivity is not improved and the size of the 
	vertex cut is reduced. The latter holds because $\deg_G(v)+1$ vertices were removed from the cut and at 
	most $\deg_G(v)$ were added. Similarly, in one iteration of step $2$ the connectivity 
	is not improved and the size of the vertex cut is reduced. Now the latter is obvious, but
	to see that the connectivity is not improved, notice that the removal of vertices in some $v$-bundle 
	has an effect in connectivity only if the whole $v$-bundle is removed. Since in step $2$ there are no
	$v$-bundles completely contained in the vertex cut (this was fixed in step $1$), the vertices removed 
	from the cut were not disconnecting anything to begin with.
	We conclude that $Y'$ is indeed a vertex cut that separates the vertices in $S'$ and moreover 
	$|Y'|\leqslant |Y|$.
	
	Now, that $Y'$ contains only vertices outside the $v$-bundles, i.e., only vertices that correspond to
	edges of $G$, it is straightforward to define an edge cut in $G$ that separates the vertices in $S$. Let 
	$ X=\{e\in E(G)\ |\ v_e\in Y'\}$,
	i.e., $X$ is the set of edges in $G$ that their corresponding vertices in $H$ are in the vertex cut. 
	Define $g\left( (G, S) , \left( H, S'\right), Y\right)$ to be equal to $X$; clearly, $g$ is  polynomial time computable.
	It remains to be shown that $X$ separates the vertices in $S$. Assume not; then there exists 
	some $v_i-v_j$ path $p=(v_i,x_1, x_2, \ldots, x_k, v_j)$ in $G-X$ for $v_i, v_j \in S$, with $i\neq j$.
	This, however, directly transforms to a $u_i-u_j$ path $p'=(u_i, v_{(v_i, x_1)},x_1', v_{(x_1, x_2)}, x_2', \ldots, x_k', v_{(x_k, v_j)}, u_j)$ in $H-Y'$, where $x_{\ell}'$ is an arbitrary vertex in the $x_{\ell}$-bundle. This is a contradiction. Thus, $X$ is a  cut that separates the vertices in $S$.
	
	Next, we prove that $\OPT_{3TNC}(H)\leqslant \OPT_{3TC}(G)$ (to improve readability we drop the subscripts). 
	Notice that any cut $A$ in $G$ that separates the vertices 
	of $S$ gives the vertex cut $B = \{v_e\in V(H)\ |\ e\in A\}$ in $H$ that separates the 
	vertices of $S'$. Since $|B|=|A|$, and by taking $|A|$ to be an optimal cut, we have $\OPT(H)\leqslant \OPT(G)$. This also implies that $c_{\alpha}= 1$ works.
	
	Finally, since $|X|=|Y'|$, we have $|X| - \OPT(G) \leqslant |Y'| - \OPT(H) \leqslant |Y| - \OPT(H)$, i.e.,  $c_{\beta} = 1$ works. We conclude that the unweighted version of 3-Terminal Node Cut is MAX SNP-hard. 
\end{proof}

As proved in \cite{KMSV98}, APX is the closure of MAX SNP under PTAS reductions (introduced by \cite{CT00}). Therefore, any MAX SNP-hard problem is also APX-hard.
Let $\mathcal I$ be the set of instances of 3-Terminal Node Cut that can be the result of 
the composition of the reduction of Theorem \ref{MNC-hard} with the linear reduction 
from Max Cut to 3-Terminal Cut, presented in \cite{DJPSY94}. 
The next corollary follows directly. 

\begin{corollary}\label{cor:3TNC-G}
	3-Terminal Node Cut is MAX SNP-hard, and thus APX-hard, even when restricted on instances in $\mathcal I$. 
\end{corollary}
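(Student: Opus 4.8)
\textbf{Proof plan for Corollary \ref{cor:3TNC-G}.}

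The plan is to simply compose two known linear reductions and invoke the transitivity of linear reducibility (which preserves MAX SNP-hardness), then conclude APX-hardness via the result of \cite{KMSV98} that every MAX SNP-hard problem is APX-hard. Specifically, \cite{DJPSY94} gives a linear reduction from Max Cut to unweighted 3-Terminal Cut, and Theorem \ref{MNC-hard} (with its functions $f$ and $g$ and constants $c_\alpha = c_\beta = 1$) gives a linear reduction from unweighted 3-Terminal Cut to unweighted 3-Terminal Node Cut. The key observation is that the image of this composed reduction lands, by definition, inside the set $\mathcal I$ of instances. Hence the hardness is inherited \emph{on instances in $\mathcal I$}, rather than merely on all instances.

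The main thing to verify is that the composition of two linear reductions is again a linear reduction. First I would recall the definition from the proof of Theorem \ref{MNC-hard}: a linear reduction consists of functions $f, g$ and constants $c_\alpha, c_\beta > 0$ satisfying the two displayed inequalities. Given reductions $(f_1, g_1, c_{\alpha_1}, c_{\beta_1})$ from Max Cut to 3-Terminal Cut and $(f_2, g_2, c_{\alpha_2}, c_{\beta_2})$ from 3-Terminal Cut to 3-Terminal Node Cut, the composite uses $f = f_2 \circ f_1$ for instances and $g$ that first applies $g_2$ and then $g_1$; the resulting constants are $c_\alpha = c_{\alpha_1} c_{\alpha_2}$ and $c_\beta = c_{\beta_1} c_{\beta_2}$. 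Both composite functions remain polynomial-time computable, and chaining the two optimum-bounding inequalities (respectively the two cost-gap inequalities) gives the required bounds with the product constants. Since Theorem \ref{MNC-hard} supplies $c_{\alpha_2} = c_{\beta_2} = 1$, the composite constants equal those of the Max Cut reduction, so no blow-up occurs.

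The only genuinely substantive point, and the one I would state explicitly, is that restricting the target problem to the \emph{image} of the composed reduction does not weaken the hardness claim: MAX SNP-hardness asserts the existence of \emph{some} linear reduction from a MAX SNP-complete problem, and the reduction we exhibit has all its output instances in $\mathcal I$ by construction. Therefore 3-Terminal Node Cut restricted to $\mathcal I$ is still MAX SNP-hard. I do not expect any real obstacle here, since both component reductions are already established; the corollary is essentially bookkeeping about reduction composition together with the explicit construction built in Theorem \ref{MNC-hard}, whose purpose was precisely to pin down the structure of the instances in $\mathcal I$. Finally, applying \cite{KMSV98} converts MAX SNP-hardness into APX-hardness, completing the argument.
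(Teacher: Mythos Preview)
Your proposal is correct and takes essentially the same approach as the paper, which simply states that the corollary ``follows directly'' from the composition of the linear reduction of \cite{DJPSY94} with that of Theorem~\ref{MNC-hard}, together with the APX closure result of \cite{KMSV98}. If anything, you spell out more of the routine details (transitivity of linear reductions, the observation that the image of the composed reduction is exactly $\mathcal I$) than the paper does.
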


Corollary \ref{cor:3TNC-G} is a crucial step towards our goal, since instances in 
$\mathcal I$ are guaranteed to have only ``large'' vertex cuts that separate the terminals.

\begin{lemma}\label{lem:large-cut}
	Let $(G, S, q)\in \mathcal I$. Then, any feasible 3-Terminal Node Cut solution for $(G, S, q)$ 
	has size greater than $\frac{1}{14}|V(G)|$. 
\end{lemma}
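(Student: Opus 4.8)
The plan is to trace the quantitative guarantees through the chain of reductions that defines the set $\mathcal I$. Recall that an instance in $\mathcal I$ arises from composing the Max Cut to 3-Terminal Cut reduction of \cite{DJPSY94} with the 3-Terminal Cut to 3-Terminal Node Cut reduction of Theorem \ref{MNC-hard}. So I would first fix notation: start from a Max Cut instance on a graph $G_0$, apply the first reduction to obtain a 3-Terminal Cut instance $G_1$, and then apply the node-splitting construction of Theorem \ref{MNC-hard} to obtain the final graph $G \in \mathcal I$. The goal is a lower bound on the size of \emph{every} feasible node cut separating the three terminals, expressed as a constant fraction of $|V(G)|$.

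The key observation to exploit is that the graph $G$ produced by Theorem \ref{MNC-hard} replaces every old vertex $v$ by a $v$-bundle of $\deg_{G_1}(v)+1$ vertices and every edge by a subdivision vertex. I would use the argument embedded in the definition of $g$ from the proof of Theorem \ref{MNC-hard}: any node cut $Y$ can be transformed into a cut $Y'$ of no larger size that avoids all bundle vertices, i.e., uses only the edge-subdivision vertices $v_e$. This means feasible cuts correspond to \emph{edge} cuts of $G_1$ separating the terminals, and the minimum such edge cut has size $\OPT_{3TC}(G_1)$. So the real task reduces to bounding the minimum 3-terminal edge cut of $G_1$ from below by a constant fraction of $|V(G)|$. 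The plan is to (i) relate $\OPT_{3TC}(G_1)$ to the Max Cut structure, noting that the reduction of \cite{DJPSY94} forces any separating cut to be reasonably large because the terminal gadgets are densely connected to the original graph; and (ii) bound $|V(G)|$ from above in terms of $|V(G_1)|$ and $|E(G_1)|$, using $|V(G)| = \sum_v (\deg_{G_1}(v)+1) + |E(G_1)| = |V(G_1)| + 3|E(G_1)|$, which is $O(|E(G_1)|)$ for connected instances.

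I would then combine these two bounds. Writing $c = \OPT_{3TC}(G_1)$ for the minimum edge cut and noting any feasible solution for $(G,S,q)$ has size at least $c$ (by the $Y \rightsquigarrow Y'$ reduction being size-nonincreasing and the fact that $|X| = |Y'|$), it suffices to show $c > \tfrac{1}{14}(|V(G_1)| + 3|E(G_1)|)$. The structural facts I expect to need from the \cite{DJPSY94} gadget are explicit counts: how many edges the gadget adds and a guaranteed lower bound on the minimum 3-terminal cut for instances coming from Max Cut. The constant $\tfrac{1}{14}$ should then fall out of plugging in these explicit numbers, so most of the work is bookkeeping of the gadget sizes rather than any new inequality.

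The main obstacle, and the step I would scrutinize most carefully, is pinning down the explicit parameters of the Max Cut to 3-Terminal Cut reduction in \cite{DJPSY94} — specifically the exact number of vertices and edges the gadget introduces and the precise lower bound it guarantees on the optimal (and hence every feasible) separating cut. Without those concrete constants the target $\tfrac{1}{14}$ cannot be certified; with them the rest is a direct substitution. I would therefore expect the bulk of the proof to consist of recalling the gadget construction in enough detail to read off $|V(G_1)|$, $|E(G_1)|$, and the minimum-cut lower bound, and then verifying the single arithmetic inequality that yields the factor $\tfrac{1}{14}$.
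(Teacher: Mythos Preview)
Your proposal is correct and follows essentially the same approach as the paper: the paper also traces the two reductions, uses the transformation $g$ from Theorem \ref{MNC-hard} to convert any node cut in $G$ to an edge cut in $G_1$ of no larger size, computes $|V(G)| = |V(G_1)| + 3|E(G_1)|$, and then plugs in the explicit gadget constants from \cite{DJPSY94} (namely $n_1 = n_0 + 3 + 52 m_0$, $m_1 = 102 m_0$, and the $27 m_0$ lower bound on any separating cut) to obtain $27 m_0 > n_2/14$. The only thing you left unspecified is exactly what you identified as the remaining bookkeeping---the gadget parameters---and once those are substituted the inequality is immediate.
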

\begin{proof}
	Let $G_0$ be a graph with $n_0$ vertices and $m_0$ edges. The reduction of \cite{DJPSY94} 
	adds $3$ terminals and, furthermore, for each edge adds $4$ new vertices and $102$ new edges.
	In fact, each edge is replaced with the gadget shown in Figure \ref{fig:3way} (Figure 11 of \cite{DJPSY94}), 
	where $s_1, s_2, s_3$ are identified with the terminals and $x, y$ with the endpoints of the edge.
	Then, each of the $12$ edges with weight $4$ is replaced by $4$ paths of length $2$. The resulting 
	graph $G_1$, has $n_1= n_0 + 3 + 52 m_0$ vertices and $m_1 = 102 m_0$ edges.
	
	Our reduction adds $1$ new vertex for each edge, and then replaces each one of the old
	vertices with $\deg_{G_1}(v)+1$ new vertices. The number of vertices of the resulting graph $G_2$ is 
	$n_2= \sum_{v\in V{(G_1)}}\left( \deg_{G_1}(v)+1 \right) + m_1 = n_1 + 3 m_1= n_0 + 3 + 358 m_0 < 378 m_0$.
	
	By the proof of Theorem $3$ in \cite{DJPSY94}, we have that any cut in $G_1$ that 
	separates the $3$ terminals has size at least $27 m_0$.
	Using $g$ from our reduction, however, we can transform a vertex cut that 
	separates the $3$ terminals in $G_2$ into a cut of the same cardinality
	that separates the $3$ terminals in $G_1$. 
	Thus, any vertex cut that  separates the $3$ terminals in $G_2$ has size at least $27 m_0$.
	To complete the proof, notice that $27 m_0 > 27 n_2/378 = n_2 /14$. 
\end{proof}

\begin{figure}[h]
	\centering
	\includegraphics[scale = 0.7]{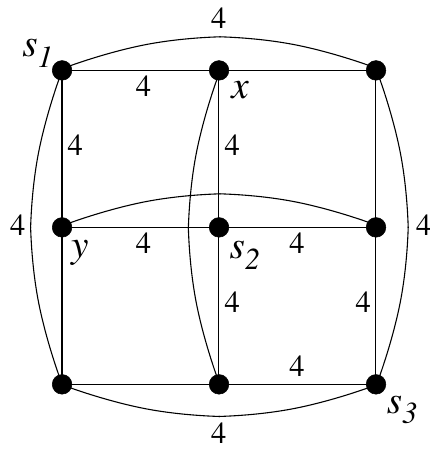}
	\caption{The gadget that ``replaces'' every edge in the linear reduction of from Max Cut to 3-Terminal Cut \cite{DJPSY94}.}\label{fig:3way}
\end{figure}

\begin{theorem}\label{APX-hard}
	Inequity Aversion Pricing for single value revenue functions is APX-hard when $\alpha(e)= 0$ for all $e\in E(G)$. 
\end{theorem}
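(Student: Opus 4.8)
The plan is to establish APX-hardness by exhibiting a linear reduction (in the sense defined in the proof of Theorem \ref{MNC-hard}) from 3-Terminal Node Cut restricted to the family $\mathcal{I}$ --- which is APX-hard by Corollary \ref{cor:3TNC-G} --- to Inequity Aversion Pricing with $\alpha(\cdot,\cdot)\equiv 0$. Since linear reductions compose and APX-hardness transfers along them, this suffices. The construction $f$ reuses the bundle gadget of Theorems \ref{np-comp} and \ref{MNC-hard}: given $(G,S,q)\in\mathcal I$ with $S=\{v_1,v_2,v_3\}$, I replace each terminal $v_i$ by a bundle of identical-neighborhood copies and keep the non-terminals as single vertices, set $\alpha\equiv 0$ on every edge, and assign three distinct values $a_1<a_2<a_3$ to the three bundles (bundle $i$ getting $a_i$) while giving the non-terminals the top value $a_3$. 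Crucially, the bundle size and the value gaps are kept linear in $|V(G)|$ rather than of order $n^3$ as in Theorem \ref{np-comp}; this is what keeps $\MAX=\Theta(|V(G)|)$ and avoids the ``dilution'' that would otherwise make the revenue gap vanish.

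Next I would verify the two directions of the reduction. For completeness, given a terminal-separating node cut of size $t$ I put a discontinuity on the corresponding non-terminals of the constructed graph; since $\alpha\equiv 0$ each surviving connected component carries a single price, so I price the component of bundle $i$ at $a_i$ and every terminal-free component at $a_3$. Because the cut separates the bundles, this is feasible and yields revenue $\MAX$ minus a loss controlled by $t$. For soundness I would use the transformation $g$ from the proof of Theorem \ref{MNC-hard}: starting from any feasible price vector, read off its discontinuity set, push those discontinuities off the bundles (replacing a fully-cut bundle by its neighbors and discarding isolated bundle cuts), and output the induced node cut $X$. The point is that a price vector whose revenue is close to optimal cannot leave two distinct bundles in the same component, for under $\alpha\equiv 0$ that would force a whole bundle to be mispriced and cost $\Omega(\text{bundle size})$ revenue, which exceeds the cost of any cut by the large-cut guarantee of Lemma \ref{lem:large-cut}.

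Lemma \ref{lem:large-cut} is the linchpin for bounding the two constants of the linear reduction. Because every terminal-separating cut in an instance of $\mathcal I$ has size $\Theta(|V(G)|)$, the optimal cut, the optimal revenue loss, and $\MAX$ are all of the same order $\Theta(|V(G)|)$; hence $\OPT\le \MAX = O(\OPT_{\mathrm{3TNC}})$ gives a constant $c_\alpha$, and the per-discontinuity revenue accounting gives a constant $c_\beta$ relating $|X|-\OPT_{\mathrm{3TNC}}$ to the revenue deficit. Restricting to instances of $\mathcal I$ is therefore not a convenience but a necessity: on general instances the optimal cut could be sublinear and the reduction would only be gap-preserving with a vanishing gap.

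The main obstacle I anticipate is the quantitative heart of soundness: showing that the revenue a price vector forfeits is, up to an absolute constant, exactly the value of the node cut that $g$ extracts. Two effects must be controlled simultaneously and with the right relative magnitudes. First, connecting two distinct-value bundles must be strictly more expensive than the largest possible cut, which fixes a lower bound on the bundle size in terms of the maximum cut --- and here the upper bound $\OPT_{\mathrm{3TNC}}\le |V(G)|$ together with Lemma \ref{lem:large-cut} pins down the admissible window. Second, and more delicate, I must keep the ``pollution'' term --- the revenue lost by non-terminal vertices that, being glued to a low-priced bundle component, cannot be charged their full value $a_3$ --- from swamping the cut term; this forces a careful choice of the values $a_1,a_2,a_3$ and of the bundle size so that this secondary loss is dominated by $c_\beta$ times the revenue deficit. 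Once these magnitudes are balanced, the two linear-reduction inequalities follow and APX-hardness of Inequity Aversion Pricing with $\alpha\equiv 0$ is immediate from Corollary \ref{cor:3TNC-G}.
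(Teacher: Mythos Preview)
Your overall plan --- bundle the terminals, set $\alpha\equiv 0$, assign three distinct values to the three bundles and the top value to the non-terminals, then lean on Lemma~\ref{lem:large-cut} --- matches the paper's reduction in spirit. The gap is the choice of a \emph{linear} reduction.

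The second L-reduction condition must hold for every feasible $y$, in particular for an \emph{optimal} pricing $\mathbf p^\star$. Since then $\OPT_B - R(\mathbf p^\star)=0$, condition~2 forces $g(\mathbf p^\star)$ to be an optimal 3-Terminal Node Cut. With constant values $a_1<a_2<a_3$ this is not guaranteed: adding one extra discontinuity to the cut costs at most $a_3$ (a constant), but may detach $\Theta(n)$ non-terminals from a low-value bundle's component and thus save $\Theta(n)$ in pollution. Hence the revenue-optimal pricing can strictly prefer a cut larger than $\OPT_{\mathrm{3TNC}}$, so $g(\mathbf p^\star)$ need not be a minimum cut, and no finite $c_\beta$ rescues condition~2 at $y=\mathbf p^\star$. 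You could try to make the per-discontinuity cost dominate total pollution by taking $a_3=\Omega(n)$, but then $\MAX=\Omega(n^2)$ while $\OPT_{\mathrm{3TNC}}=\Theta(n)$ by Lemma~\ref{lem:large-cut}, and condition~1 fails instead. Your ``main obstacle'' paragraph in fact identifies this tension, but phrases it as bounding pollution by $c_\beta$ times the revenue deficit --- and the revenue deficit is zero at the optimum.

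This is exactly why the paper uses a \emph{PTAS reduction} rather than an L-reduction: the values are $t-2,\,t-1,\,t$ with $t=\lceil 42/\varepsilon\rceil$ depending on the target ratio $r$, and bundle size $4tn$. Total pollution is then at most $2n$, while each discontinuity costs roughly $t$; so a $c(r)$-approximate pricing (with $c(r)=1-1/(20t^2)\to 1$ as $r\to 1$) cannot afford more than $(1+\varepsilon)\,\OPT_{\mathrm{3TNC}}$ discontinuities (Claim~\ref{claim:apx-2}). The PTAS-reduction framework permits this $r$-dependent construction; the L-reduction framework does not. Once you switch to a PTAS reduction, the ingredients you already have (the bundle gadget, $\alpha\equiv 0$, Lemma~\ref{lem:large-cut}, and the cleanup procedure for $g$) are essentially what is needed.
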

\begin{proof} 
	We use a PTAS reduction to prove the APX-hardness. 
	Let $A$ and $B$ be two NPO problems. Here assume that $A$ is a minimization and $B$ 
	is a maximization problem. We say that $A$ is PTAS-reducible to $B$ if there exist 
	three computable functions $f$, $g$, and  $c$ such that
	\begin{itemize}
		\item For any instance $x$ of $A$ and any $r>1$, $f(x, r)$ is an instance of $B$ computable in time polynomial in $|x|$. 
		\item For any instance $x$ of $A$, any $r>1$, and any feasible solution $y$ of $f(x, r)$, $g(x, y, r)$ is a feasible solution of $A$ computable in time polynomial in both $|x|$ and $|y|$.
		\item $c: (1, \infty) \to (0, 1)$
		\item For any instance $x$ of $A$, any $r>1$, and any feasible solution $y$ of $f(x, r)$,
		\begin{center}
			$cost_B(y)\geqslant c(r)\cdot \OPT_B\left( f(x, r)\right)$ implies $cost_A\left( g(x, y, r)\right) \leqslant r\cdot \OPT_A\left( x\right)$.
		\end{center}
	\end{itemize}
	The reduction is from the restriction of 3-Terminal Node Cut on $\mathcal I$. 
	It is similar to the reduction in the proof of Theorem \ref{np-comp}, but here 
	all the parameters are carefully tuned. 
	Consider an instance of 3-Terminal Node Cut, i.e., a graph $G(V,E)$ with $|V(G)|=n$, 
	a set of non adjacent terminals $S=\{v_1, v_2, v_3\}$, and an integer $q$, such that 
	$(G, S, q)\in \mathcal I$. We describe the function $f$ in the definition of a PTAS reduction. 
	
	For $r>1$, let $\varepsilon = \min\{0.5, r-1\}$ and $t = \left\lceil \frac{42}{\varepsilon} \right\rceil$. 
	Also, let $H = f(G, r)$ be the graph obtained from $G$ by replacing every vertex $v\in S$ 
	by a $v$-bundle of $4t n$ vertices, 
	each such vertex having the same neighbors as $v$. The set of prices is $\{\perp, 1,\allowbreak 2, \ldots, t\}$.
	To define the single value revenue functions, for every $v\in V(G)\setminus S$, 
	let $\val(v)= t$, and for every 
	$v_i\in S$, let $\val(u_{v_i})=t +i - 3$
	for all $u_{v_i}$ in the $v_i$-bundle. We define  $f\left( (G, S, q), r\right)$ 
	to be the above instance. Clearly, $f$ is computable in polynomial time in $n$.
	
	Next we define the function $g$ in the definition of a PTAS reduction.
	Given a feasible price vector $\mathbf p$ for $H$, first we transform it to an appropriate
	feasible price vector $\mathbf p'$.
	\begin{enumerate}
		\item While there is a whole $v$-bundle only with discontinuities, set price 
		$\val(u_{v})$ to all the vertices $u_v$ in this $v$-bundle and $\perp$ to all of their neighbors.
		\item Consider the graph after we remove all the vertices with price $\perp$. 
		While there are $i, j$ (assume $i<j$) such that vertices from both the $v_i$-bundle 
		and the $v_j$-bundle are in the same connected component:
		\begin{itemize}
			\item[--] If all the vertices in the $v_i$-bundle are assigned prices in
			$\{\perp, \val(u_{v_i})+1, \ldots, t\}$, then set price $\val(u_{v_i})$ to all the 
			vertices in the $v_i$-bundle and $\perp$ to all of their neighbors. 
			\item[--] Otherwise, set price $\val( u_{v_j}) $ to all the 
			vertices in the $v_j$-bundle and $\perp$ to all of their neighbors.
		\end{itemize}
		
	\end{enumerate}
	Then, we use this price vector $\mathbf p'$ in order to define a solution $D$ to the 3-Terminal Node Cut
	instance. Let $D = \{v\in V(G)\setminus S\ |\ \mathbf p'_{v} =\, \perp\}$, i.e., 
	$D$ is the set of non terminal
	vertices in $G$ that their corresponding vertices in $H$ have discontinuities. 
	Again, it is straightforward to see that computing $g\left( (G, S, q), \mathbf p, r\right) $ 
	takes polynomial time in $n$.
	
	It remains to determine the function $c$ in the definition of the reduction; for any $r\in (1, \infty)$  let $c(r) = 1- \frac{1}{20 t^2}$. 
	We need to show that 
	\[   R(\mathbf p)\geqslant c(r)\cdot \OPT(H) \implies cost(D) \leqslant r\cdot \OPT(G)\,. \]
	
	\begin{myclaim}\label{claim:apx-1}
		If $R(\mathbf p)\geqslant c(r) \OPT(H)$, then $\mathbf p'=\mathbf p$, 
		i.e., there is no $v$-bundle only with discontinuities, and every $v$-bundle is in a different 
		connected component.
	\end{myclaim}
	
	\begin{proofof}{} 
		For the first part, assume that there is a $v$-bundle, where every single vertex gets price $\perp$. 
		We get the following upper bound for $R(\mathbf p)$:
		\[R(\mathbf p) \leqslant \left( n-3 \right) t + \sum_{i=2}^{3} 4 t n \left( t +i-3 \right) \leqslant 8t^2 n-3tn -3t <8t^2 n \,.\]
		On the other hand, there is a feasible price vector that sets all the prices equal to $t-2$, and this 
		way we get a lower bound on $\OPT(H)$. 
		\[\OPT(H) \geqslant (3\cdot 4tn+n-3)(t-2)=12t^2n-23tn-2n-3t+6> 12t^2n-28tn \,.\]
		Notice that, since $\varepsilon \leqslant 0.5$, we have $t\geqslant 84$, and therefore $c(r) >0.99$. 
		So, we have $\frac{R(\mathbf p)}{\OPT(H)}<\frac{8t}{12t-28}<0.8 < c(r) $, which is a contradiction.
		
		For the second part, assume that there are two $v$-bundles in the same component. To arrive
		at a contradiction, it suffices to show that there exists a feasible price vector $\mathbf p''$,
		such that $R(\mathbf p)< c(r)R(\mathbf p'')$ and therefore $R(\mathbf p)< c(r)\OPT(H)$. 
		Let $\mathbf p''$ be the price vector 
		obtained after just one iteration of step 2 in the description of $g$. 
		Assuming that we are talking about the $v_i$-bundle and the $v_j$-bundle, 
		with $i<j$,  the gain in revenue 
		is at least $4 t n \big( \val(u_{v_j})-\val(u_{v_i})\big) \geqslant 4tn$ (see also the
		proof of Claim \ref{claim:np-2} in the proof of Theorem \ref{np-comp}). On the other hand, 
		the loss in revenue is upper bounded by 
		$(n-3)\left( t + i-3 \right) \leqslant tn$. 
		So, $R(\mathbf p'') \geqslant R(\mathbf p)+ 3tn$. Suppose 
		$R(\mathbf p)\geqslant c(r)R(\mathbf p'')$. Then it is a matter of simple calculations to see that
		\[R(\mathbf p)\geqslant c(r)(R(\mathbf p)+ 3tn) \implies R(\mathbf p)\geqslant 60 t^3 n - 3 t n > 57t^3 n\,.\]
		An obvious upper bound for $R(\mathbf p)$ however, is to say that each 
		vertex produces revenue at most $t$, i.e.,
		$ R(\mathbf p) \leqslant  (12tn+ n-3)t< 13tn$. Combining the two, we get the contradiction 
		$R(\mathbf p)> 57t^3 n > 13tn > R(\mathbf p)$. We conclude that 
		$R(\mathbf p)< c(r)R(\mathbf p'')$, that leads to 
		the contradiction $R(\mathbf p)< c(r)\OPT(H)$. Hence, in the graph defined by removing the 
		discontinuities of $\mathbf p$ from $H$, every $v$-bundle is in a different connected component.
	\end{proofof}

	\begin{myclaim}\label{claim:apx-2}
		If $R(\mathbf p)\geqslant c(r)  \OPT(H)$, then $\mathbf p$ has less than $(1+\varepsilon)\OPT(G)$ discontinuities.
	\end{myclaim}
	
	\begin{proofof}{} 
		Let $\mathbf p$ be a feasible price vector with $R(\mathbf p)\geqslant c(r) \OPT(H)$ 
		and assume that $\mathbf p$ has 
		at least $(1+\varepsilon)\OPT(G)$ discontinuities. Also, consider the feasible price vector 
		$\mathbf p^*$ induced by an optimal cut in $G$, i.e., the price vector that sets $\perp$ in every 
		vertex that has a corresponding vertex removed by the cut in $G$ and then uses optimal 
		single price in each ``connected component''. 
		To get a contradiction, we show that $R(\mathbf p)< c(r)R(\mathbf p^*)$ and therefore $R(\mathbf p) < \allowbreak c(r)\OPT(H)$.
		To obtain a lower bound on $R(\mathbf p^*)$, notice that any vertex without a 
		discontinuity produces revenue at least $t-2$, while any vertex $u_{v_i}$ in a $v_i$-bundle 
		produces revenue exactly $t+i-3$. So, 
		\[R(\mathbf p^*) \geqslant \left( n-3 -\OPT(G)\right) (t-2) + \sum_{i=1}^{3} 4 t n \left( t +i-3 \right)\,.\]  
		To get an upper bound for $R(\mathbf  p)$, notice that each vertex without a discontinuity 
		produces revenue at least $t-2$ and at most $t$, while any vertex $u_{v_i}$ in a $v_i$-bundle 
		produces revenue exactly $t+i-3$, i.e,
		\[ R(\mathbf p) \leqslant \left( n-3 \right) t - (1+\varepsilon)\OPT(G) (t-2) + \sum_{i=1}^{3} 4 t n \left( t +i-3 \right) \,.\]
		We consider the difference $R(\mathbf p) - c(r)R(\mathbf p^*)$, and show it is negative. Recall that Lemma \ref{lem:large-cut} implies that $\OPT(G)\geqslant n/14$. 
		\begin{IEEEeqnarray*}{rCl}
			R(\mathbf p) - c(r)R(\mathbf p^*) & \leqslant & \frac{1}{20 t^2}\Big( \left( n-3 -\OPT(G)\right) (t-2) + \sum_{i=1}^{3} 4 t n \left( t +i-3 \right)\Big)\\
			& & +\, 2(n-3) - \varepsilon\, \OPT(G) (t-2)   \\
			& < &  \frac{1}{20 t^2}\left( n t +  12 t^2 n \right) + 2n - \varepsilon\,\frac{1}{14} n \left( \frac{42}{\varepsilon}-2\right) \\
			& < &  \frac{13n}{20} + 2n - 2.9 n < -0.25 n <0  \,,
		\end{IEEEeqnarray*}
		which leads to contradiction. Thus, $\mathbf p$ has less than $(1+\varepsilon)\OPT(G)$ discontinuities.  
	\end{proofof}
	
	By combining Claim \ref{claim:apx-1}, Claim \ref{claim:apx-2}, and the fact that $1+ \varepsilon \leqslant r$, we directly get that a $c(r)$-approximate solution for $H$ gives an $r$-approximate solution for $G$, thus concluding the proof. 
\end{proof}

\begin{remark}
	The maximum price $k$ in the instance constructed in the 
	proof of Theorem \ref{APX-hard} does not depend on the size of the problem. Given that 
	there is some constant $\rho$ beyond which it is hard to approximate $3$-Terminal Node 
	Cut, this means that there exists some constant $k^*$ for which Inequity Aversion Pricing does not 
	have a PTAS. Note that for such a $k^*$ we do have a constant factor approximation, 
	with factor $H_{k^*}^{-1}$.
\end{remark}

\subsection{Hardness when $k=3$}
We close this section by showing that Inequity Aversion Pricing remains hard even when we only have three possible prices and $\alpha(e) = 0$ for all edges. This identifies the transition from polynomial time solvability, which we have when $k=2$, to NP-hardness as soon as we have a higher number of available prices.

\begin{theorem}\label{NP-hard-3}
	Inequity Aversion Pricing for single value revenue functions is NP-complete when $\alpha(e)= 0$ for all $e\in E(G)$, even if the price set is $P = \{1, 2, 3\}$. 
\end{theorem}

The theorem follows from the fact that the problem is trivially in NP and the next three lemmas, each consisting of a simple reduction. We begin with the definition of two intermediate problems used in those reductions.

\begin{definition}
	The \emph{Tripartite Independent Set} problem is the restriction of {Independent Set} on tripartite graphs. In particular, given a tripartite graph, a tripartition of its vertices, and an integer $q$, is there an independent set of size at least $q$?
\end{definition}

The next problem is a stricter version of our problem, regarding the price that we are allowed to offer to each node.

\begin{definition}
	The \emph{Strict Inequity Aversion Pricing} problem is a variant of 
	Inequity Aversion Pricing in which $\alpha(e)= 0$ for all $e\in E(G)$ and the seller is disallowed to offer a customer a price different from the customer's valuation, i.e.\ $p_v \in \{ \val(v), \bot \}$ for each node $v$.
\end{definition}

\begin{lemma}
	Tripartite Independent Set is NP-hard.
\end{lemma}

\begin{proof}
	We reduce the general Independent Set problem to Tripartite Independent Set using a construction from
	\cite{poljak1974note}.
	
	Given a graph $G$ with $n$ vertices and $m$ edges, we 2-subdivide its edges, i.e., replace each edge with a path
	of length 3, to obtain a graph $H$, which is clearly tripartite. We  call
	the vertices added by 2-subdivisions \emph{new} as opposed to the \emph{old} vertices coming from  $G$. Now $G$ has an independent set of size $q$ if and only if $H$ has an independent set of size $q+m$: Starting from an independent set of
	$G$, we can add to it one of the two new vertices on each 2-subdivided edge.
	Conversely, every independent set of $H$ can be transformed into one that is not
	smaller and contains precisely $m$ new vertices (one for each 2-subdivided
	edge); the old vertices of this independent set then form an independent set of $G$. 
\end{proof}

\begin{lemma}
	Strict Inequity Aversion Pricing with price set $P = \{1, 2, 3 \}$ is NP-hard.
\end{lemma}

\begin{proof}
	We give a reduction from Tripartite Independent Set.
	Given a tripartite graph $G$, a tripartition $V_1, V_2, V_3$ of its vertices, and an integer $q$, we construct an instance of Strict Inequity Aversion Pricing as follows:
	For each vertex $v \in V_i$, we  have a bundle of $6/i$ nodes $v'$ with
	$\val(v') = i$, for $i=1, 2, 3$. For each edge $(u, v)$ of $G$, we add constraints between all pairs
	$(u', v')$ of nodes associated with $u$ and $v$ respectively, setting $\alpha(u', v') =
	\alpha(v', u') = 0$. Call $H$ the resulting graph and let $R_q=6q$.
	
	We claim that $G$ has an independent set of size $q$ if and only if there is a feasible price vector for the above instance that guarantees revenue $R_q$.
	One direction is straightforward. For every vertex $v$ in an independent set of size $q$ in $G$, we set price $\val(v')$ to every vertex $v'$ of the corresponding bundle in $H$. This way no constraint is violated, since we started with an independent set, and each bundle contributes to the total revenue either a value of $6$, if it corresponds to a vertex in the independent set, or $0$, for a total of $6q$.
	
	Conversely, suppose that there is a feasible price vector $\vec{p}$ for $H$ that guarantees revenue $R_q$, for the strict version of Inequity Aversion Pricing.
	Because all nodes $v'$ in a bundle of $H$ have the same neighborhood, they may be given the same offer (i.e., $\val(v')$ or $\bot$). If this is not already the case for $\vec{p}$, we can find such a feasible price vector $\vec{p}'$ by using in each bundle the maximum price that $\vec{p}$ uses on any vertex of this bundle. Since the new prices only go up (without affecting feasibility), $\vec{p}'$ guarantees revenue $R' \ge R_q$.  Moreover, under $\vec{p}'$, a bundle of nodes contributes to the total revenue either $6$ or $0$, regardless of which part $V_i$ their associated vertex
	$v$ belongs to. Let us denote by $S$ the set of vertices of $G$ such that their
	associated nodes \emph{were not} assigned $\bot$; from the construction it
	follows that $S$ is an independent set in $G$, since we have that $\alpha(e) = 0$ for every edge $e$, and therefore, for an edge $(u, v)$ in $G$, it cannot be the case that the associated nodes for both $u$ and $v$ in $H$ were not assigned $\bot$. The size of $S$ is $R'/6 \ge R_q /6 = q$, which concludes the proof. 
\end{proof}

\begin{lemma}
	Inequity Aversion Pricing with
	price set  $P = \{1, 2, 3 \}$ is NP-hard.
\end{lemma}


\begin{proof}
	We give a reduction from Strict Inequity Aversion Pricing with 
	price set $\{1, 2, 3 \}$. Consider an instance of the decision version of the problem, i.e.,
	a graph $G$ with edge constraints ($\alpha(\cdot,\cdot) = 0$), and a single-value revenue function for each node, along with a positive integer $t$. Let $n_i = |\{v\in V(G)\ |\ \val(v)=i\}|$ for $i\in\{1, 2, 3\}$. We construct an instance of Inequity Aversion Pricing as follows:
	For each node $v\in G$, we add $\val(v)$ new nodes $v'$ with $\val(v') = \val(v)$ and
	impose constraints $\alpha(v, v') = \alpha(v', v) = 0$, forming a star with $v$
	at its center. Call $H$ the resulting graph and let $t' = t + n_1 + 4n_2 + 9n_3$. 
	
	We first observe that if there is a feasible price vector $\vec{p}$ for $G$ that produces revenue $t$, then we can use it to set the price on the old nodes of $H$, while for each new node $v'$ we set
	its price to $\val(v')$, and the resulting price vector $\vec{p}'$ is still feasible. The feasibility of $\vec{p}'$ follows from the fact that for every old node $v$, $\vec{p}$ sets a price of $\val(v)$ or $\bot$. By the construction of $H$, it is straightforward to see that this way we extract revenue $t'$.
	
	Conversely, suppose that there is a feasible price vector $\vec{p}'$ for $H$ that produces revenue $t' \geq t + n_1 + 4n_2 + 9n_3$. We will construct a feasible price vector $\vec{p}$ for $G$ as follows. For each $v \in G$, $\vec{p}_v = \bot$ if $\vec{p}'_v \neq \val(v)$, and $\vec{p}_v = \vec{p}'_v$ otherwise. Feasibility follows from the feasibility of $\vec{p}'$ for $H$ (we have only introduced more $\bot$s). We next show that $\vec{p}$ gives revenue at least $t$.
	
	To do so, we construct a feasible price vector $\vec{p}''$ for $H$ that produces revenue $t'' \geq t'$ and agrees with $\vec{p}$ on all old nodes. For each old node $v \in H$ such that $\vec{p}'_v \neq \val(v)$, we set $\vec{p}''_v = \bot$, and for all new nodes $v'$ that are in a star with $v$ we set $\vec{p}''_{v'} = \val(v') = \val(v)$. This way we increase the revenue by at least $1$ without sacrificing feasibility. Now the revenue extracted using $\vec{p}''$ on $H$ is at least $t'$ but a total of at most $n_1 + 4n_2 + 9n_3$ is due to new nodes. That is, the revenue extracted from old nodes in $H$ using $\vec{p}''$ is at least $t'-(n_1 + 4n_2 + 9n_3)=t$. Since $\vec{p}''$ agrees with $\vec{p}$ on all old nodes, the revenue extracted using  $\vec{p}$ on $G$ is is at least $t$ as well. 
\end{proof}

\section{A Generalization to Multi-Demand Users}
\label{sec:slope-svrf}

So far, we have always assumed that each node has demand for only one copy of the product. 
A natural generalization is to consider multi-demand users who are interested in receiving a certain number of copies if the price is affordable. For example, someone might want to buy either a certain number of licenses of a video game (because she wants to play the game with her friends) or no license at all.  
This would correspond to a type of inelastic multi-unit demand in the terminology of auctions.
Assume again that there is enough supply of copies to satisfy all the demand, if necessary.
Then, there is a natural way to generalize single value revenue functions to 
capture such simple scenarios.

A revenue function $R_v(\cdot)$ is called a \textit{multi-demand single value revenue function} if there 
exist an integer $s_v$ (the number of copies demanded) and  a value $\val(v)$ such that:
\[R_v(p_v) =
\begin{cases}
s_v p_v & \text{if } \val(v)\geqslant p_v \\
0 & \text{if } \val(v) < p_v
\end{cases}.\]
The intuition here is the same as for the single value revenue functions.

The objective now is again the same. Given a multi-demand single value revenue function for each node, find a feasible price vector $\vec{p}$ that maximizes the total revenue.
We call this problem \emph{Multi-Demand Inequity Aversion Pricing}. As this is a generalization of 
Inequity Aversion Pricing, it is immediate that any negative result for the latter
yields the same negative result for the former. In particular, by Theorems \ref{np-comp} and \ref{APX-hard}, Multi-Demand Inequity Aversion Pricing is NP-hard and APX-hard for the
corresponding edge constraints.

Quite surprisingly, we also prove that when for each user the number of demanded copies is polynomially bounded, there is a strict reduction from 
Multi-Demand Inequity Aversion Pricing to Inequity Aversion Pricing. This directly implies
that any approximation factor achieved for the latter is also achieved for the former.
Therefore, we establish that the two problems are equivalent in terms of approximability.
Note that the theorem holds for general edge constraints. 

\begin{theorem}\label{theorem:strict-reduction}
	Let $q$ be any polynomial. There exists a strict reduction from Multi-Demand Inequity Aversion Pricing with demands bounded by $q(n)$ to Inequity Aversion Pricing.
\end{theorem}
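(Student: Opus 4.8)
First I would recall what must be established: a \emph{strict reduction} from problem $A$ to problem $B$ consists of polynomial-time computable functions $f$ (mapping instances of $A$ to instances of $B$) and $g$ (mapping a solution of $f(x)$ back to a solution of $x$) such that the performance ratio is preserved, i.e., for every instance $x$ and every feasible solution $y$ of $f(x)$, the ratio of $g(x,y)$ on $x$ is at least as good as the ratio of $y$ on $f(x)$. Proving such a reduction immediately yields that any approximation factor for $B$ transfers to $A$, which is exactly the claim.

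\textbf{The construction $f$.} The plan is to encode the multiplicity $s_v$ of each node by replacing a multi-demand node with a \emph{cluster} of ordinary single-value nodes. Concretely, given a Multi-Demand instance with demands bounded by $q(n)$, for each node $v$ with demand $s_v$ and value $val(v)$, I would introduce $s_v$ twin copies $v^{(1)},\dots,v^{(s_v)}$, each with the same value $val(v)$ and unit demand. These copies should be pairwise connected (or otherwise constrained) using edge parameters $\alpha(\cdot,\cdot)=0$ so that, being clones with identical value, they behave as a single unit: any feasible price vector is forced to assign the same price to all copies of $v$ (since $\alpha=0$ between twins and they share the same value, there is no incentive and no feasibility gain in splitting them, and a discontinuity on some copies only loses revenue). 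For each original edge $(u,v)$ I would connect every copy of $u$ to every copy of $v$ with the same $\alpha(u,v),\alpha(v,u)$ as in the original instance. Since $s_v\le q(n)$, the new instance has at most $n\cdot q(n)$ nodes and is polynomial-size, so $f$ is polynomial-time computable.

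\textbf{The construction $g$ and ratio preservation.} The decoding $g$ takes a feasible price vector $\mathbf p$ on $f(x)$ and first normalizes it so that all twin copies of each $v$ receive a common price — I would argue this normalization never decreases revenue, so without loss of generality the solution is already ``cluster-uniform.'' Then $g$ outputs the price vector on the original node set obtained by reading off this common price for each $v$. The key revenue identity is that a cluster of $s_v$ unit-demand copies all priced at $p_v$ yields revenue $s_v\cdot R_v^{\text{unit}}(p_v)$, which is exactly the multi-demand revenue $R_v(p_v)$; summing over nodes gives that the objective value is \emph{identical} in the two instances under the correspondence. Because $\OPT$ values also coincide (an optimal multi-demand solution maps to a cluster-uniform solution of the same value and vice versa), the performance ratio is preserved exactly, establishing strictness.

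\textbf{Main obstacle.} The delicate step is the normalization argument inside $g$: I must show that forcing all twins to a single price never hurts and that feasibility is maintained. The subtlety is ruling out a clever solution that prices different copies of the same node differently, or sets discontinuities on a strict subset of a cluster, to gain feasibility slack on the inter-node edges. I expect this to be handled by the $\alpha=0$ intra-cluster edges together with identical values: any two copies of $v$ priced differently immediately violate a zero-difference constraint unless one is a discontinuity, and a discontinuity on a copy can only lose its revenue while never relaxing a constraint that a uniform price would violate — so a uniform, maximal feasible price is always at least as good. Making this monotonicity rigorous (in particular checking that lowering all copies to the price that the feasible solution uses on the surviving copies preserves every inter-node edge constraint) is where the real work lies; the rest is bookkeeping.
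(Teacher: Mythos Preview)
Your proposal is correct and matches the paper's approach essentially line for line: the same clique-of-copies construction with intra-cluster $\alpha=0$ and inter-cluster edges inheriting the original $\alpha$, followed by the observation that $\OPT$ is preserved and that any feasible solution on the blown-up instance can be collapsed to one on the original without losing revenue. The only cosmetic difference is that the paper's decoding $g$ skips the explicit ``normalization'' step and directly sets $p_v$ to the maximum price appearing in the $v$-cluster; since the $\alpha=0$ intra-cluster edges force all non-$\bot$ copies to share a single price anyway, this is exactly your cluster-uniform solution read off to the original graph, and the feasibility check is the one you outline.
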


\begin{proof}
	Suppose we are given an instance $I$ of Multi-Demand Inequity Aversion Pricing, i.e., a graph $G(V, E)$, an edge restriction function $\alpha(\cdot, \cdot)$, and for each node $v$ her valuation $\val(v)$ and her demand $s_v$. We are going to construct an equivalent instance $I'$ of Inequity Aversion Pricing.
	The reduction creates $s_v$ copies of $v$ for each $v \in V$ and connects them to each other to create a clique $K_{s_v}$. Edges inside such a clique have $\alpha=0$. For every edge $(u, v)\in E$ all the edges between the vertices of the $u$-clique and the $v$-clique are added with the same restrictions as the original edge. Let $G'=(V', E')$ be the resulting graph. If $s_{\max}= \max_{v\in V} s_v$ then we have 
	$|V'|\leqslant n s_{\max}$ and $|E'|\leqslant (n+m) s_{\max}^2$.
	
	We use $\OPT'$ and $\OPT$ to denote the optimal revenue of this instance and of the original, respectively.
	Our goal is to show that for any price vector $\mathbf p'$ for $I'$ we can efficiently find a feasible price vector $\mathbf p$ for $I$ with such that $\frac{R(\mathbf p)}{\OPT} \geqslant  \frac{R'(\mathbf p')}{\OPT'}$. 
	We begin by proving that $\OPT = \OPT'$.
	
	\begin{myclaim}\label{claim:equal-prices-inside-Kv}
		An optimal price vector $\mathbf p'$ for $I'$  sets the same price for all vertices inside each $v$-clique.
	\end{myclaim}
	\begin{proofof}{} 
		Note that $\alpha=0$ inside each $v$-clique, so all these vertices have the same common price $p'_v$ or $\bot$. If there are $x,y$ in a  $v$-clique such that $p'_x \neq \bot \wedge p'_y = \bot$ then by setting $p'_y = p'_x$ we obtain a new feasible price vector for $I'$ that gives greater revenue than $\mathbf p'$, which contradicts its optimality.
	\end{proofof}
	
	By Claim \ref{claim:equal-prices-inside-Kv}, we directly obtain a feasible solution for $I$ with   revenue $\OPT$ by setting $p_v$ equal to the common price from the $v$-clique. Therefore, $\OPT \geqslant \OPT'$.
	
	On the other hand, each feasible price vector $\mathbf p$ for $I$ can be adopted as a feasible price vector $\mathbf p'$ for $I'$ with the same revenue. To see that, just set the same price $p'_{u_v} = p_v$ for each copy $u_v$ of $v$ in the $v$-clique of $G'$. All edge constraints are satisfied, so the solution is feasible, and it clearly gives the same revenue.
	By taking $\mathbf p$ to be an optimal price vector for $I$, the above implies that $\OPT' \geqslant \OPT$.
	We conclude that $\OPT = \OPT'$.
	
	Finally, we need the following.
	\begin{myclaim}\label{claim:P-geq-Psimple}
		Each feasible price vector $\mathbf p'$ for $I'$ can be transformed into a feasible price vector $\mathbf p$ for $I$ with at least the same revenue.
	\end{myclaim}
	\begin{proofof}{} 
		For each $u \in V$, if $V_u$ is the set of vertices in the $u$-clique of $G'$, define $u^* = \argmax_{x \in V_u} p'_x$. Then, set $p_u = p'_{u^*}$. Such a $\mathbf p$ is feasible for $I$ because $\forall (u,v) \in E,\ \ \alpha(u,v) = \alpha(u^*,v^*)$, where $v^*$ is any vertex in $V_v$, and the constraint $\alpha(u^*,v^*)$ is already satisfied by $\mathbf p'$. It is straightforward that $R(\mathbf p) \geqslant R'(\mathbf p')$.
	\end{proofof}
	
	\noindent For the price vector described in the proof of Claim \ref{claim:P-geq-Psimple}, we have
	\[\frac{R(\mathbf p)}{\OPT} \geqslant \frac{R'(\mathbf p')}{\OPT} = \frac{R'(\mathbf p')}{\OPT'}\,, \]
	which completes the proof. 
\end{proof}

It would be interesting to determine whether the hardness of the problem changes when the demands are not
polynomially bounded, although such functions are not very realistic in our setting.
Notice, however, that even in that case it is not hard to obtain a $\frac{1}{H_k}$-approximation 
in polynomial time by using the best single-price solution. In fact, we still have a $\frac{1}{H_r}$-approximation, where $r = \min\{n,v_{max}\}$.

\section*{Concluding remarks}
We studied a revenue maximization problem under inequity aversion for the natural class of single-value revenue functions. Apart from establishing the first hardness results for this class, we also derived approximation algorithms based on combinatorial and graph-theoretic tools, which improve the state of the art when the set of available prices is small. We find this to be a realistic setting as special price offers are usually small in number, derived by specific discount and promotion policies.  
Clearly, the most interesting open problem is to resolve the approximability for general $k$, i.e., can we have a better than $O(1/H_k)$-approximation for large $k$? Exploring further models of negative externalities is another attractive direction that has not been given as much attention as the case of positive externalities.

\section*{Acknowledgements} 
This research was supported by National Science Centre, Poland, 2015/17/N/ST6/03684. It was also supported by the European Union (European Social Fund - ESF) and Greek national funds through the Operational Program ``Education and Lifelong Learning'' of the National Strategic Reference Framework (NSRF) - Research Funding Program: THALES.

\bibliography{pricing2}   

\begin{thebibliography}{10}

\bibitem{AGHMMN10}
H.~Akhlaghpour, M.~Ghodsi, N.~Haghpanah, H.~Mahini, V.~S. Mirrokni, and
  A.~Nikzad.
\newblock Optimal iterative pricing over social networks (extended abstract).
\newblock In {\em Proceedings of the 6th Workshop on Internet and Network
  Economics, {WINE 2010}}, pages 415--423, 2010.

\bibitem{AMT13}
N.~Alon, Y.~Mansour, and M.~Tennenholtz.
\newblock Differential pricing with inequity aversion in social networks.
\newblock In {\em ACM Conference on Economics and Computation, {EC} 2013},
  pages 9--24, 2013.

\bibitem{AMS16}
G.~Amanatidis, E.~Markakis, and K.~Sornat.
\newblock Inequity aversion pricing over social networks: Approximation
  algorithms and hardness results.
\newblock In {\em 41st International Symposium on Mathematical Foundations of
  Computer Science, {MFCS} 2016}, volume~58 of {\em LIPIcs}, pages 9:1--9:13.
  Schloss Dagstuhl - Leibniz-Zentrum fuer Informatik, 2016.

\bibitem{BKMX11}
S.~Bhattacharya, J.~Kulkarni, K.~Munagala, and X.~Xu.
\newblock On allocations with negative externalities.
\newblock In {\em Proceedings of the 7th Workshop on Internet and Network
  Economics, {WINE 2011}}, pages 25--36, 2011.

\bibitem{BO00}
G.~E. Bolton and A.~Ockenfels.
\newblock A theory of equity, reciprocity and competition.
\newblock {\em American Economic Review}, 100:166--193, 2000.

\bibitem{CCHW15}
Z.~Cao, X.~Chen, X.~Hu, and C.~Wang.
\newblock Pricing in social networks with negative externalities.
\newblock In {\em Proceedings of the 4th International Conference on
  Computational Social Networks, {CSoNet 2015}}, pages 14--25, 2015.

\bibitem{CT00}
P.~Crescenzi and L.~Trevisan.
\newblock On approximation scheme preserving reducibility and its applications.
\newblock {\em Theory Comput. Syst.}, 33(1):1--16, 2000.

\bibitem{Cun89}
W.~H. Cunningham.
\newblock The optimal multiterminal cut problem.
\newblock In {\em Reliability Of Computer And Communication Networks,
  Proceedings of a {DIMACS} Workshop, New Brunswick, New Jersey, USA, December
  2-4, 1989}, pages 105--120, 1989.

\bibitem{DJPSY94}
E.~Dahlhaus, D.~S. Johnson, C.~H. Papadimitriou, P.~D. Seymour, and
  M.~Yannakakis.
\newblock The complexity of multiterminal cuts.
\newblock {\em {SIAM} J. Comput.}, 23(4):864--894, 1994.

\bibitem{DR01}
P.~Domingos and M.~Richardson.
\newblock Mining the network value of customers.
\newblock In {\em Proceedings of the 7th ACM International Conference on
  Knowledge Discovery and Data Mining, {KDD 2001}}, pages 57--66, 2001.

\bibitem{FS99}
E.~Fehr and K.~M. Schmidt.
\newblock A theory of fairness, competition and co-operation.
\newblock {\em Quarterly Journal of Economics}, 114:817--868, 1999.

\bibitem{FS12}
D.~Fotakis and P.~Siminelakis.
\newblock On the efficiency of influence-and-exploit strategies for revenue
  maximization under positive externalities.
\newblock In {\em Proceedings of the 8th Workshop on Internet and Network
  Economics, {WINE 2012}}, pages 270--283, 2012.

\bibitem{GVY04}
N.~Garg, V.~V. Vazirani, and M.~Yannakakis.
\newblock Multiway cuts in node weighted graphs.
\newblock {\em J. Algorithms}, 50(1):49--61, 2004.

\bibitem{GHKSW06}
A.~Goldberg, J.~Hartline, A.~Karlin, M.~Saks, and A.~Wright.
\newblock Competitive auctions.
\newblock {\em Games and Economic Behavior}, 55(2):242--269, 2006.

\bibitem{HIMM11}
N.~Haghpanah, N.~Immorlica, V.~S. Mirrokni, and K.~Munagala.
\newblock Optimal auctions with positive network externalities.
\newblock In {\em Proceedings of the 12th {ACM} Conference on Electronic
  Commerce, {EC 2011}}, pages 11--20, 2011.

\bibitem{HMS08}
J.~Hartline, V.~S. Mirrokni, and M.~Sundararajan.
\newblock Optimal marketing strategies over social networks.
\newblock In {\em Proceedings of the 17th international conference on World
  Wide Web}, pages 189--198, 2008.

\bibitem{KKT03}
D.~Kempe, J.~Kleinberg, and \'{E}. Tardos.
\newblock Maximizing the spread of influence through a social network.
\newblock In {\em Proceedings of the ninth ACM SIGKDD international conference
  on Knowledge discovery and data mining}, pages 137--146, 2003.

\bibitem{KMSV98}
S.~Khanna, R.~Motwani, M.~Sudan, and U.~V. Vazirani.
\newblock On syntactic versus computational views of approximability.
\newblock {\em {SIAM} J. Comput.}, 28(1):164--191, 1998.

\bibitem{Kleinberg07}
J.~Kleinberg.
\newblock Cascading behavior in networks: Algorithmic and economic issues.
\newblock In N.~Nisan, T.~Roughgarden, E.~Tardos, and V.~Vazirani, editors,
  {\em Algorithmic Game Theory}, chapter~24, pages 613--632. Cambridge
  University Press, Cambridge, 2007.

\bibitem{poljak1974note}
Svatopluk Poljak.
\newblock A note on stable sets and colorings of graphs.
\newblock {\em Commentationes Mathematicae Universitatis Carolinae},
  15(2):307--309, 1974.

\end{thebibliography}

\end{document}